\documentclass[journal]{IEEEtranTIE}
\usepackage{graphicx}
\pdfoutput=1
\usepackage{cite}
\usepackage{picinpar}
\usepackage{amsmath}
\usepackage{url}
\usepackage{flushend}
\usepackage[latin1]{inputenc}
\usepackage{colortbl}
\usepackage{soul}
\usepackage{multirow}
\usepackage{pifont}
\usepackage{color}
\usepackage{alltt}
\usepackage[hidelinks]{hyperref}
\usepackage{enumerate}
\usepackage{siunitx}
\usepackage{breakurl}
\usepackage{epstopdf}
\usepackage{pbox}
\usepackage{subfigure}
\usepackage{amsfonts}

\newtheorem{lemma}{Lemma}
\newtheorem{remark}{Remark}

\newtheorem{proposition}{Proposition}
\newenvironment{proof}{{\indent \indent \it Proof:}}

\begin{document}
\title{	An Adaptive Multivariable Smooth Second-Order Sliding Mode Approach}

\author{
	\vskip 1em	
	Xidong Wang
	\thanks{
Xidong Wang is with the Research Institute of Intelligent Control and Systems, School of Astronautics, Harbin Institute of Technology, Harbin 150001, China (e-mail: 17b904039@stu.hit.edu.cn).
	}
}

\maketitle
	
\begin{abstract}
This paper presents a novel adaptive multivariable smooth second-order sliding mode approach with the features of fast finite-time convergence, adaptation to disturbances and smooth. This approach can be directly applied to the controller design of multi-input and multi-output (MIMO) systems. In addition, a novel adaptive multivariable smooth disturbance observer is proposed based on this structure. In terms of the types of disturbances, the fast finite-time convergence and the fast finite-time uniformly ultimately boundedness of the systems are proved with the corresponding fast finite-time Lyapunov stability theory. Finally, the effectiveness of the proposed approach is validated by comparative numerical simulations.
\end{abstract}

\begin{IEEEkeywords}
Adaptive multivariable smooth second-order sliding mode control (AMSSOSMC), Adaptive multivariable smooth disturbance observer (AMSDO), Fast finite-time convergence, Fast finite-time uniformly ultimately boundedness.
\end{IEEEkeywords}

{}

\definecolor{limegreen}{rgb}{0.2, 0.8, 0.2}
\definecolor{forestgreen}{rgb}{0.13, 0.55, 0.13}
\definecolor{greenhtml}{rgb}{0.0, 0.5, 0.0}

\section{Introduction}

\IEEEPARstart{I}{n} recent years, sliding mode control has attracted much attention because of its insensitivity and strong robustness to the parameter and external disturbance uncertainty. However, the chattering phenomenon existing in traditional sliding mode control restricts its application in practice. To diminish the chattering effect, the concept of high order sliding mode (HOSM) is proposed. In the present HOSM algorithms, the super-twisting algorithm is very popular and own practical application value due to the characteristics of finite-time convergence, strong robustness and solely requiring the information of sliding mode variables \cite{Levant2003}. Numerous modified super-twisting methods have been proposed to further improve the control performance \cite{2007Smooth,Moreno2008,Shtessel2012,ASTC2015,Jiang2018,2020arXiv}.

However, the above-mentioned control methods can only be utilized in single-input single-output (SISO) system. When applied in the MIMO system, the system needs to be decoupled into multi-SISO systems, which constrains the application of super-twisting sliding mode control. In \cite{2014Multi}, a multivariable super-twisting sliding mode approach is present to directly design the controller for MIMO systems, which obtains better control performance than that of the single super-twisting sliding mode control. In \cite{2015Adaptive_Multi}, an adaptive multivariable super-twisting sliding mode control method is proposed, which can adapt to the disturbance of unknown boundary.

In this paper, inspired by [9], we extend our previous work [7] to multivariate form and propose a new proposition called the adaptive multivariable smooth second-order sliding mode control (AMSSOSMC) approach, which integrates the merits of fast finite-time convergence, adaptation to the disturbances and smooth. This new method can be directly applied to the controller design of MIMO systems. Moreover, a novel adaptive multivariable smooth disturbance observer (AMSDO) is also proposed based on this new method, which can adjust the parameters automatically without a priori knowledge of the upper bound of disturbances derivative and have smoother output than that of adaptive multivariable disturbance observer (AMDO) proposed in [9].

In terms of the types of disturbances, the fast finite-time convergence and the fast finite-time uniformly ultimately boundedness of the systems will be proved with the corresponding finite-time Lyapunov stability theory. The effectiveness and superiority of the proposed method is verified by comparative simulation experiments.

The rest of this paper is organized as follows. In Section II, some necessary lemmas are given. The adaptive multivariable smooth second-order sliding mode approach is provided in Section III. Contrastive numerical simulations are executed to verify the effectiveness of the proposed approach in Section V. Section VI concludes this paper.

Notation: In this paper, we use $\left\| \cdot \right\|$  for the Euclidean norm of vectors and $\otimes$ for the Kronecker product. ${\lambda _{\max }}\left(  \cdot  \right)$ and ${\lambda _{\min }}\left(  \cdot  \right)$ denote the maximum and minimum eigenvalues of a matrix, respectively. ${I_{nn}}$ and ${I_{n}}$ denote the $n \times n$ identity matrix and $n$ dimension unit column vector, respectively.  ${0_{nn}}$ and ${0_{n}}$ denote $n \times n$ zero matrix and $n$ dimension zero column vector, respectively. 
\section{Preliminaries}
To better describe the following \emph{Lemmas}, we consider a general system
\begin{equation}
\dot x = h(x(t)),{x_0} = x(0)
\end{equation}
where $h:{U_1} \to {R^n}$  is continuous on an open neighborhood ${U_1} \subset {R^n}$ of the origin and assume that $h(0) = 0$. The solution of (5) is denoted as $x(t,{x_0})$, which is understood in the sense of Filippov \cite{1999Filippov}.

\begin{lemma} [fast finite-time stability \cite{2005lemma1}]
Suppose there exists a continuous and positive-definite function $V:{U_1} \to R$ such that the following condition holds:
\begin{equation}
\dot V(x) \le  - {c_1}V{(x)^p} - {c_2}V(x)
\end{equation}
where ${c_1} > 0,{c_2} > 0,p \in (0,1)$, then the trajectory of (5) is fast finite-time stable, and the settling time is given by:
\begin{equation}
T \le \frac{{\ln [1 + {c_2}V{{({x_0})}^{1 - p}}/{c_1}]}}{{{c_2}(1 - p)}}
\end{equation}
\end{lemma}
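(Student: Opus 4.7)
The plan is to reduce the nonlinear differential inequality to a linear one by a standard change of variables, solve the resulting linear inequality by the integrating-factor method, and then impose the vanishing condition at the settling time to extract $T$.

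First I would introduce $W(x) = V(x)^{1-p}$. Since $V$ is positive-definite and $p \in (0,1)$, $W$ is well defined and $W = 0 \Leftrightarrow V = 0$. Differentiating along trajectories and using the hypothesis $\dot V \le -c_1 V^p - c_2 V$ gives
\begin{equation}
\dot W = (1-p)V^{-p}\dot V \le -(1-p)c_1 - (1-p)c_2 W ,
\end{equation}
so $W$ satisfies a linear first-order differential inequality with constant coefficients. This eliminates the fractional-power term that would otherwise obstruct an elementary integration.

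Next I would apply the integrating factor $e^{(1-p)c_2 t}$. The inequality becomes $\frac{d}{dt}\bigl(W(t)\,e^{(1-p)c_2 t}\bigr) \le -(1-p)c_1\, e^{(1-p)c_2 t}$, and integrating from $0$ to $t$ yields the bound
\begin{equation}
W(t) \le W(0)\,e^{-(1-p)c_2 t} - \tfrac{c_1}{c_2}\bigl(1 - e^{-(1-p)c_2 t}\bigr) .
\end{equation}
Since the right-hand side is a continuous decreasing function of $t$ that starts at $W(0) \ge 0$ and tends to $-c_1/c_2 < 0$, it must hit zero in finite time, and by the comparison above so must $W(t)$ (and hence $V$).

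Finally I would compute the settling time. Setting the right-hand side equal to zero at $t = T$, rearranging gives $e^{(1-p)c_2 T} = 1 + c_2 W(0)/c_1 = 1 + c_2 V(x_0)^{1-p}/c_1$, whence
\begin{equation}
T \le \frac{\ln\!\bigl[\,1 + c_2 V(x_0)^{1-p}/c_1\,\bigr]}{(1-p)\,c_2} ,
\end{equation}
which is precisely the claimed estimate. The only subtle point, and the one place I would be careful, is justifying that the differential inequality for $W$ really implies the comparison bound on $[0,T]$ despite $V$ being only continuous (with Filippov solutions in mind): this is handled by the standard comparison lemma applied to absolutely continuous solutions, so no genuine obstacle arises. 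The algebra of inverting the exponential at the end is routine once the integrating-factor step is in place.
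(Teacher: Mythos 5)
Your argument is correct. Note that the paper does not prove this lemma at all --- it is quoted verbatim from the cited reference \cite{2005lemma1} and used as a black box --- so there is no in-paper proof to compare against. Your derivation is the standard one: the substitution $W=V^{1-p}$ linearizes the inequality, the integrating factor $e^{(1-p)c_2 t}$ gives $W(t)\le W(0)e^{-(1-p)c_2 t}-\tfrac{c_1}{c_2}\bigl(1-e^{-(1-p)c_2 t}\bigr)$, and inverting the exponential at the zero crossing yields exactly the claimed bound $T\le \ln\bigl[1+c_2V(x_0)^{1-p}/c_1\bigr]/\bigl[c_2(1-p)\bigr]$. The two caveats you flag are the right ones to flag: the comparison step for merely absolutely continuous $W$ is covered by the standard comparison lemma, and the singularity of $V^{-p}$ at $V=0$ is harmless because the argument is only needed while $V>0$, and once $V$ reaches zero the hypothesis $\dot V\le -c_1V^{p}-c_2V\le 0$ together with positive definiteness keeps it there.
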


\begin{lemma} [fast finite-time uniformly ultimately boundedness \cite{Jiang2018}]
Suppose there exists a continuous and positive-definite function $V:{U_1} \to R$ such that the following condition holds:
\begin{equation}
\dot V(x) \le  - {c_1}V{(x)^{{p_1}}} - {c_2}V(x){\rm{ + }}{c_3}V{(x)^{{p_2}}}
\end{equation}
where ${c_1} > 0,{c_2} > 0,{c_3} > 0,{p_1} \in (0,1),{p_2} \in (0,{p_1})$, then the trajectory of (5) is fast finite-time uniformly ultimately boundedness, and the settling time is given by:
\begin{equation}
T \le \frac{{\ln [1 + ({c_2} - {\theta _2})V{{({x_0})}^{1 - {p_1}}}/({c_1} - {\theta _1})]}}{{({c_2} - {\theta _2})(1 - {p_1})}}
\end{equation}
where ${\theta _1}$ and ${\theta _2}$ are arbitrary positive constants holding ${\theta _1} \in (0,{c_1}),{\theta _2} \in (0,{c_2})$, then $x(t,{x_0})$ can converge to a region of equilibrium point in a finite time $T$. In addition, the residual set of solution of (5) can be given by:
\begin{equation}
D = \left\{ {x:{\theta _1}V{{(x)}^{{p_1} - {p_2}}} + {\theta _2}V{{(x)}^{1 - {p_2}}} < {c_3}} \right\}
\end{equation}

Define an auxiliary variable ${\theta _3} \in \left( {0,1} \right)$. If ${\theta _3}$ is selected satisfying
\begin{equation}
{\theta _3}^{1 - {p_2}}{\theta _2}^{{p_1} - {p_2}}{c_3}^{1 - {p_2}} = {\theta _1}^{1 - {p_2}}{(1 - {\theta _3})^{{p_1} - {p_2}}}
\end{equation}
then (10) can be reduced to $D = {D_1} = {D_2}$, where
\begin{equation}
\begin{aligned}
{D_1}& = \left\{ {x:V{{(x)}^{{p_1} - {p_2}}} < {\theta _3}{c_3}/{\theta _1}} \right\}\\
{D_2}& = \left\{ {x:V{{(x)}^{1 - {p_2}}} < \left( {1 - {\theta _3}} \right){c_3}/{\theta _2}} \right\}
\end{aligned}
\end{equation}
which means the state $x$ can converge to ${D_1} = {D_2}$ in finite time $T$.
\end{lemma}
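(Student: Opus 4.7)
The plan is to reduce the hypothesis (8) to the setting already handled by Lemma 1 on the complement of the desired residual set, and then pin down the geometry of that set. First I would absorb the positive perturbation $c_3 V^{p_2}$ into the two negative terms by introducing slack parameters $\theta_1 \in (0,c_1)$ and $\theta_2 \in (0,c_2)$. Adding and subtracting $\theta_1 V^{p_1}$ and $\theta_2 V$ rewrites (8) as
\begin{equation*}
\dot V \le -(c_1-\theta_1)V^{p_1} - (c_2-\theta_2)V - V^{p_2}\bigl[\theta_1 V^{p_1-p_2} + \theta_2 V^{1-p_2} - c_3\bigr].
\end{equation*}
The bracketed quantity is precisely the expression defining $D$ in (10); it is non-negative exactly when $x \notin D$. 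Hence on $D^c$ the bracket contributes a non-positive term and the residual inequality
\begin{equation*}
\dot V \le -(c_1-\theta_1)V^{p_1} - (c_2-\theta_2)V
\end{equation*}
matches the hypothesis of Lemma 1 with $p = p_1 \in (0,1)$ and positive constants $c_1-\theta_1$, $c_2-\theta_2$. Substituting these into the settling-time formula (7) of Lemma 1 yields the bound (9) and simultaneously shows that $V$ cannot escape $D^c$ by growing, so it must cross into $D$ within the stated time.

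The remaining task is to recover the simpler characterization $D = D_1 = D_2$ under condition (11). I would argue from the boundary geometry: on $\partial D_1$ one has $\theta_1 V^{p_1-p_2} = \theta_3 c_3$, and on $\partial D_2$ one has $\theta_2 V^{1-p_2} = (1-\theta_3)c_3$. For the two boundaries to coincide as sublevel sets of $V$, the extracted values $(\theta_3 c_3 / \theta_1)^{1/(p_1-p_2)}$ and $((1-\theta_3)c_3 / \theta_2)^{1/(1-p_2)}$ must agree; equating them and raising both sides to the power $(p_1-p_2)(1-p_2)$ reproduces exactly the balance condition (11). When it holds, $\theta_1 V^{p_1-p_2} + \theta_2 V^{1-p_2} = \theta_3 c_3 + (1-\theta_3) c_3 = c_3$ on that common boundary, which is also the boundary of $D$, so all three sets coincide.

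The main obstacle I anticipate is confirming that a $\theta_3 \in (0,1)$ satisfying (11) actually exists. I would dispatch this by an intermediate-value argument: the left-hand side of (11) is continuous and strictly increasing in $\theta_3$, vanishing at $\theta_3 = 0$ and attaining $\theta_2^{p_1-p_2} c_3^{1-p_2}$ at $\theta_3 = 1$, while the right-hand side is continuous and strictly decreasing, equal to $\theta_1^{1-p_2}$ at $\theta_3 = 0$ and vanishing at $\theta_3 = 1$. Since the LHS starts below the RHS and ends above it, they cross at some $\theta_3 \in (0,1)$, establishing the reduction. Everything else is direct algebra once this balance is in place.
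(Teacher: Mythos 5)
The paper offers no proof of this lemma at all---it is quoted verbatim from \cite{Jiang2018}---so there is nothing internal to compare against; your argument must stand on its own, and it essentially does. The decomposition $\dot V \le -(c_1-\theta_1)V^{p_1}-(c_2-\theta_2)V - V^{p_2}\bigl[\theta_1 V^{p_1-p_2}+\theta_2 V^{1-p_2}-c_3\bigr]$ is the standard (and correct) route: outside $D$ the bracket is non-negative, Lemma 1 applies with constants $c_1-\theta_1$, $c_2-\theta_2$, and the settling-time bound (9) follows; your monotonicity observation that $\theta_1 V^{p_1-p_2}+\theta_2 V^{1-p_2}$ is strictly increasing in $V$ is exactly what makes $D$, $D_1$, $D_2$ sublevel sets of $V$ and justifies the boundary-matching argument; and the intermediate-value existence of $\theta_3$ is a welcome extra that the lemma statement itself does not even demand. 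One point deserves correction, however: your claim that equating the two boundary values ``reproduces exactly'' condition (11) is not true as written. Carrying out your own computation, $(\theta_3 c_3/\theta_1)^{1-p_2}=((1-\theta_3)c_3/\theta_2)^{p_1-p_2}$ rearranges to $\theta_3^{1-p_2}\theta_2^{p_1-p_2}c_3^{1-p_1}=\theta_1^{1-p_2}(1-\theta_3)^{p_1-p_2}$, with exponent $1-p_1$ on $c_3$, whereas (11) carries $c_3^{1-p_2}$. Your derivation is the correct one, and it exposes what appears to be a typo in the stated condition (the two agree only when $p_1=p_2$, which is excluded); you should state this explicitly rather than assert agreement you did not check. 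Separately, to conclude \emph{ultimate} boundedness rather than mere reachability of $D$, add the one-line observation that on $\partial D$ one still has $\dot V\le -(c_1-\theta_1)V^{p_1}-(c_2-\theta_2)V<0$, so the trajectory cannot leave $\overline D$ once it has entered.
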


\section{An Adaptive Multivariable Smooth Super-Twisting Sliding Mode Approach}
Motivated by [9], we extend our previous work [7] to multivariate form and obtain the following new proposition.
\begin{proposition}
Considering the following system
\begin{equation}
\begin{aligned}
{\dot x_1} &=  - {L_1}(t)\frac{{{x_1}}}{{{{\left\| {{x_1}} \right\|}^{{\textstyle{1 \over m}}}}}} - {L_2}(t){x_1} + {x_2}\\
{\dot x_2} &=  - {L_3}(t)\frac{{{x_1}}}{{{{\left\| {{x_1}} \right\|}^{{\textstyle{2 \over m}}}}}} - {L_4}(t){x_1} + d
\end{aligned}
\end{equation}
where ${x_1},{x_2}\in\mathbb{R}^n$ and $\left\| d \right\| \le \delta $, $\delta$ is an unknown non negative constant. The adaptive gains ${L_1}(t),{L_2}(t),{L_3}(t),{L_4}(t)$ are formulated as
\begin{equation}
\begin{aligned}
{L_1}(t)&={k_1}L_0^{{\textstyle{{m - 1} \over m}}}\left( t \right),{\rm{  }}{L_2}(t) = {k_2}L_0\left( t \right)\\
{L_3}(t)& = {k_3}L_0^{{\textstyle{{2m - 2} \over m}}}\left( t \right),{\rm{  }}{L_4}(t) = {k_4}L_0^2\left( t \right)
\end{aligned}
\end{equation}
where ${k_1},{k_2},{k_3},{k_4},m$ are positive constants satisfying
\begin{equation}
{m^2}{k_3}{k_4} > \left( {\frac{{{m^3}{k_3}}}{{m - 1}} + \left( {4{m^2} - 4m + 1} \right){k_1}^2} \right){k_2}^2,m > 2
\end{equation}

${L_0}(t)$ is a positive, time-varying, and scalar function. The ${L_0}(t)$ satisfies
\begin{equation}
\dot{L_0}\left( t \right) =
\begin{cases}
{\kappa},&if \quad \left\| {x_1} \right\| \ge \varepsilon \hfill \\
0,&else \hfill \\
\end{cases} 
\end{equation}
where $\kappa$ is a positive constant, $\varepsilon$ is an arbitrary small positive value. Then the following statements hold

(\romannumeral1) If $d = 0$, then $x_1,\dot x_1$ can fast converge to the origin in finite time.

(\romannumeral2) If $d \ne 0$ and $d$ is a bounded disturbance, then $x_1,\dot x_1$ can fast converge to a region of the origin in finite time.
\end{proposition}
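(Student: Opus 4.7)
The plan is to pass to the rescaled coordinates $\zeta_1 = x_1/\|x_1\|^{1/m}$, $\zeta_2 = x_2$, in which $\|\zeta_1\|=\|x_1\|^{(m-1)/m}$ and the system in the proposition takes the form of a vector super-twisting system that is homogeneous in $\zeta_1$; this is the multivariable analogue of the scalar change of variable $|s|^{1/2}\mathrm{sgn}(s)$ used in the classical super-twisting analysis. As Lyapunov candidate I would take $V=\zeta^{\top}(P\otimes I_{nn})\zeta + (L_0-L_0^{\ast})^2/(2\gamma)$ with a symmetric positive definite $2\times 2$ matrix $P$, a target gain $L_0^{\ast}$ to be fixed later, and an adaptation gain $\gamma>0$; the Kronecker-product structure is natural because the transformation couples the components of $x_1$ only through the scalar $\|x_1\|$.

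Differentiating $V$ along the trajectories, I expect $\dot V$ to decompose into a super-twisting quadratic form carrying the gains $L_1,L_3$ (scaling like a negative power of $\|\zeta_1\|$), a linear-dissipation quadratic form carrying $L_2,L_4$, a cross term from the disturbance $d$, and the adaptation contribution $(L_0-L_0^{\ast})\dot L_0/\gamma$. The pivotal algebraic step is to check that the inequality imposed on $k_1,\dots,k_4,m$ is exactly the Schur-complement condition that makes both quadratic forms positive definite; this involves expanding two $2\times 2$ determinants, using $m>2$, and collecting terms so that $m^2k_3k_4-(m^3k_3/(m-1)+(4m^2-4m+1)k_1^2)k_2^2$ emerges as the decisive positive quantity. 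Once this is established, eigenvalue bounds together with the homogeneity of the transformation yield $\dot V\le -c_1V^{p_1}-c_2V$ in the unperturbed case and $\dot V\le -c_1V^{p_1}-c_2V+c_3V^{p_2}$ in the perturbed case, where $p_1\in(0,1)$ is determined by $m$ and $p_2\in(0,p_1)$ arises from bounding the disturbance cross term via Young's inequality.

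Part (i) then follows directly from \emph{Lemma 1}: fast finite-time convergence of $\zeta$ to the origin implies fast finite-time convergence of $x_1$ and, by continuity of the dynamics and the vanishing of $L_i\zeta_1$ at $\zeta_1=0$, also of $\dot x_1$. Part (ii) follows from \emph{Lemma 2} combined with a boundedness argument on the adaptation law: while $\|x_1\|\ge\varepsilon$, $L_0$ is nondecreasing, so if $L_0$ were unbounded the negative term $-c_2(L_0)V$ would eventually dominate $+c_3V^{p_2}$ and force $\|x_1\|<\varepsilon$ in finite time, a contradiction; hence $L_0$ stabilizes at some finite $L_0^{\ast}$, the hypotheses of \emph{Lemma 2} are satisfied with the constants evaluated at $L_0^{\ast}$, and $x_1,\dot x_1$ converge in finite time to the residual set $D_1=D_2$ described in \emph{Lemma 2}.

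The main obstacle I expect is the joint handling of the adaptive gain and the two different powers of $L_0$ appearing in the $L_i$: the cross terms produced by $\dot L_0$ acting through the time-varying $L_i$ are not obviously sign-definite, and one must argue that they are dominated uniformly in $L_0$ by the diagonal negative terms in $\dot V$. Tied to this is the bookkeeping required to confirm that the $k_i$-condition is tight enough over the entire admissible range of $L_0(t)$, and the translation of the residual set of \emph{Lemma 2} back into an explicit neighborhood of $(x_1,\dot x_1)=0$ through the nonlinear change of variables.
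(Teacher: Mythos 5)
Your overall strategy (normalize $x_1$ by a power of its norm, use a quadratic Kronecker-structured Lyapunov function, and invoke \emph{Lemma 1} and \emph{Lemma 2}) is the right one, but there is a structural gap: a two-block state $(\zeta_1,\zeta_2)=(x_1/\|x_1\|^{1/m},\,x_2)$ with a $2\times 2$ matrix $P$ cannot simultaneously generate the $-c_1V^{p_1}$ and $-c_2V$ terms that fast finite-time stability requires. The paper works with a \emph{three}-block state, $\xi_1=L_0^{(m-1)/m}x_1/\|x_1\|^{1/m}$, $\xi_2=L_0x_1$, $\xi_3=x_2$, and a $3\times 3$ block $P$: the extra coordinate $\xi_2$ is what carries the linear gains $L_2,L_4$ into a second quadratic form $-L_0\,\xi^{T}\widetilde{\Omega}_2\xi$ (yielding $-c_2V$), while the homogeneous part $-\frac{L_0}{\|\xi_1\|^{1/(m-1)}}\xi^{T}\widetilde{\Omega}_1\xi$ yields $-c_1V^{p_1}$ with $p_1=(2m-3)/(2m-2)$. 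Correspondingly, the gain condition is a positive-definiteness condition on these $3\times 3$ matrices, not on ``two $2\times 2$ determinants''; the factor $4m^2-4m+1=(2m-1)^2$ in the hypothesis traces to the off-diagonal entry $-k_2(2m-1)$ of $\widetilde{\Omega}_1$, which your two-block formulation never produces.

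Two further points your sketch passes over. First, differentiating $x_1/\|x_1\|^{1/m}$ in $\mathbb{R}^n$ produces the rank-one projection $I_{nn}-\frac{1}{m}\frac{x_1x_1^{T}}{\|x_1\|^{2}}$ acting on $\dot x_1$, hence a cross term proportional to $\frac{\xi_1\xi_1^{T}\xi_3}{\|\xi_1\|^{2}}$; this is the genuinely multivariable difficulty (absent in the scalar case), and it must be bounded via $(\xi_1^{T}\xi_3)^2\le\|\xi_1\|^2\|\xi_3\|^2$ and folded into $\widetilde{\Omega}_1,\widetilde{\Omega}_2$ \emph{before} checking positive definiteness --- your sketch does not account for it at all. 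Second, augmenting $V$ by $(L_0-L_0^{\ast})^2/(2\gamma)$ does not pair with the adaptation law $\dot L_0=\kappa$: there is nothing in $\dot V$ for the term $(L_0-L_0^{\ast})\dot L_0/\gamma$ to cancel against, so this augmentation only adds an unsigned term. The paper instead keeps $V=\xi^{T}P\xi$, bounds the $\dot L_0$ contribution by $\frac{m-1}{m}\frac{\dot L_0}{L_0}\xi^{T}Q\xi$ for an explicit diagonal $Q$, and absorbs it into the $-L_0n_3V$ term once $L_0$ has grown large enough; your (correct) observation that $L_0$ must stop increasing in finite time is the right ingredient, but it belongs in this absorption argument rather than in the Lyapunov function itself.
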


\begin{proof}
To facilitate the analysis, define the following new state vectors
\begin{equation}
\begin{aligned}
{\xi _1} = \frac{{L_0^{\frac{{m - 1}}{m}}}}{{{{\left\| {{x_1}} \right\|}^{\frac{1}{m}}}}}{x_1},{\xi _2} = {L_0}{x_1},{\xi _3} = {x_2}
\end{aligned}
\end{equation} 
where ${\xi _i}\in\mathbb{R}^n (i=1,2,3)$.

Taking the derivative of (13) yields
\begin{equation}
\begin{aligned}
\dot \xi  = \left[ {\begin{array}{*{20}{c}}
{{{\dot \xi }_1}}\\
{{{\dot \xi }_2}}\\
{{{\dot \xi }_3}}
\end{array}} \right] = \left[ {\begin{array}{*{20}{c}}
{\frac{{m - 1}}{m}\frac{{{{\dot L}_0}{x_1}}}{{{{\left[ {{L_0}\left\| {{x_1}} \right\|} \right]}^{\frac{1}{m}}}}} + \frac{{L_0^{\frac{{m - 1}}{m}}}}{{{{\left\| {{x_1}} \right\|}^{\frac{1}{m}}}}}\left[ {{I_{nn}} - \frac{1}{m}\frac{{{x_1}x_1^T}}{{{{\left\| {{x_1}} \right\|}^2}}}} \right]{{\dot x}_1}}\\
{{{\dot L}_0}{x_1} + {L_0}{{\dot x}_1}}\\
{ - {k_3}\frac{{L_0^{\frac{{2m - 2}}{m}}{x_1}}}{{{{\left\| {{x_1}} \right\|}^{\frac{2}{m}}}}} - {k_4}L_0^2{x_1} + d}
\end{array}} \right]
\end{aligned}
\end{equation}

In view of the definition in (13), one has
\begin{equation}
\begin{aligned}
\frac{{{x_1}}}{{\left\| {{x_1}} \right\|}} = \frac{{{\xi _1}}}{{\left\| {{\xi _1}} \right\|}} = \frac{{{\xi _2}}}{{\left\| {{\xi _2}} \right\|}}
\end{aligned}
\end{equation} 

Then
\begin{equation}
\begin{aligned}
&\left[ {{I_{nn}} - \frac{1}{m}\frac{{{x_1}x_1^T}}{{{{\left\| {{x_1}} \right\|}^2}}}} \right]{\dot x_1}\\ 
&= \left[ {{\xi _3} - \frac{{{\xi _1}{\xi _1}^T{\xi _3}}}{{m{{\left\| {{\xi _1}} \right\|}^2}}}} \right] - \sum\limits_{i = 1}^2 {{k_i}} \left[ {{I_{nn}} - \frac{1}{m}\frac{{{\xi _i}{\xi _i}^T}}{{{{\left\| {{\xi _i}} \right\|}^2}}}} \right]{\xi _i}\\ 
&= \left[ {{\xi _3} - \frac{{{\xi _1}{\xi _1}^T{\xi _3}}}{{m{{\left\| {{\xi _1}} \right\|}^2}}}} \right] - \left[ {\frac{{m - 1}}{m}\left( {{k_1}{\xi _1} + {k_2}{\xi _2}} \right)} \right]
\end{aligned}
\end{equation}

Substituting (16) into (14) results in 
\begin{equation}
\begin{aligned}
\dot \xi  =  - \frac{{{L_0}}}{{{{\left\| {{\xi _1}} \right\|}^{\frac{1}{{m - 1}}}}}}A\xi  - {L_0}B\xi  + C
\end{aligned}
\end{equation}
where
\begin{equation}
\begin{aligned}
&A = \left[ {\begin{array}{*{20}{c}}
{\frac{{m - 1}}{m}{k_1}}&{\frac{{m - 1}}{m}{k_2}}&{ - 1}\\
0&0&0\\
{{k_3}}&0&0
\end{array}} \right] \otimes {I_{nn}}\in\mathbb{R}^{3n \times 3n}\\
&B = \left[ {\begin{array}{*{20}{c}}
0&0&0\\
{{k_1}}&{{k_2}}&{ - 1}\\
0&{{k_4}}&0
\end{array}} \right] \otimes {I_{nn}}\in\mathbb{R}^{3n \times 3n}\\
&C = {C_1} + {C_2} + {C_3}\in\mathbb{R}^{3n}
\end{aligned}
\end{equation}
with
\begin{equation}
\begin{aligned}
&{C_1} = \frac{{m - 1}}{m}\frac{{{{\dot L}_0}}}{{{L_0}}}\left[ {\begin{array}{*{20}{c}}
{{\xi _1}}\\
{\frac{m}{{m - 1}}{\xi _2}}\\
{{0_n}}
\end{array}} \right]\\
&{C_2} = \left[ {\begin{array}{*{20}{c}}
{{0_n}}\\
{{0_n}}\\
d
\end{array}} \right]\\
&{C_3} = \left[ {\begin{array}{*{20}{c}}
{\frac{{ - {L_0}{\xi _1}{\xi _1}^T{\xi _3}}}{{m{{\left\| {{\xi _1}} \right\|}^{2 + \frac{1}{{m - 1}}}}}}}\\
{{0_n}}\\
{{0_n}}
\end{array}} \right] 
\end{aligned}
\end{equation}

For the system (9), select a positive definite Lyapunov function as $V\left( \xi  \right) = {\xi ^T}P\xi$, where
\begin{equation}
\begin{aligned}
P = \frac{1}{2}\left[ {\begin{array}{*{20}{c}}
{\frac{{2m}}{{m - 1}}{k_3} + {k_1}^2}&{{k_1}{k_2}}&{ - {k_1}}\\
{{k_1}{k_2}}&{2{k_4} + {k_2}^2}&{ - {k_2}}\\
{ - {k_1}}&{ - {k_2}}&2
\end{array}} \right]\otimes {I_{nn}}\in\mathbb{R}^{3n \times 3n}
\end{aligned}
\end{equation}
where $P$ is symmetric positive definite. 

Taking the derivative of $V\left( \xi  \right)$ along the trajectories of system (9) yields
\begin{equation}
\begin{aligned}
\dot V =  - \frac{{{L_0}}}{{{{\left\| {{\xi _1}} \right\|}^{\frac{1}{{m - 1}}}}}}{\xi ^T}{\Omega _1}\xi  - {L_0}{\xi ^T}{\Omega _2}\xi  + \widetilde V
\end{aligned}	
\end{equation}
where ${\Omega _1} = {A^T}P + PA$, ${\Omega _2} = {B^T}P + PB$ and $\widetilde V = 2{\xi ^T}PC$

Denoting $\widetilde {{V_i}} = 2{\xi ^T}P{C_i}(i = 1,2,3)$, $\widetilde V$ can be rewritten as
\begin{equation}
\begin{aligned}
\widetilde V = \widetilde {{V_1}} + \widetilde {{V_2}} + \widetilde {{V_3}}
\end{aligned}	
\end{equation}

The $\widetilde {{V_1}}$ and $\widetilde {{V_2}}$ in (22) satisfy the following inequalities
\begin{equation}
\begin{aligned}
&\widetilde {{V_1}} = 2{\xi ^T}P{C_1} \le \frac{{m - 1}}{m}\frac{{{{\dot L}_0}}}{{{L_0}}}{\xi ^T}Q\xi \\
&\widetilde {{V_2}} = 2{\xi ^T}P{C_2} \le \sqrt {k_1^2 + k_2^2 + 4} \left\| \xi  \right\|\delta 
\end{aligned}	
\end{equation}
where $Q = diag\left[ {{q_1},{q_2},{q_3}} \right] \otimes {I_{nn}}$ is a diagonal matrix with positive diagonal elements, which are expressed as follows
\begin{equation}
\begin{aligned}
{q_1}& = \frac{{2m}}{{m - 1}}{k_3} + k_1^2 + \frac{{\left( {2m - 1} \right){k_1}{k_2}}}{{2\left( {m - 1} \right)}} + \frac{{{k_1}}}{2}\\
{q_2} &= \frac{m}{{2\left( {m - 1} \right)}}\left( {4{k_4} + 2k_2^2 + {k_2}} \right) + \frac{{\left( {2m - 1} \right){k_1}{k_2}}}{{2\left( {m - 1} \right)}}\\
{q_3} &= \frac{{{k_1}}}{2} + \frac{{m{k_2}}}{{2\left( {m - 1} \right)}} 
\end{aligned}
\end{equation}

In addition, it can be observed that
\begin{equation}
\begin{aligned}
\frac{{{\xi _3}^T{\xi _1}{\xi _1}^T{\xi _3}}}{{{{\left\| {{\xi _1}} \right\|}^2}}} = \frac{{{{\left( {{\xi _1}^T{\xi _3}} \right)}^2}}}{{{{\left\| {{\xi _1}} \right\|}^2}}} \le {\left\| {{\xi _3}} \right\|^2} 
\end{aligned}
\end{equation}

Thus, the $\widetilde {{V_3}}$  satisfies the following inequality
\begin{equation}\small
\begin{aligned}
&\widetilde {{V_3}} = 2{\xi ^T}P{C_3} \\
&\le  - \frac{1}{m}\frac{{{L_0}}}{{{{\left\| {{\xi _1}} \right\|}^{\frac{1}{{m - 1}}}}}}\left[ {\left( {\frac{{2m{k_3}}}{{m - 1}} + k_1^2} \right)\xi _1^T{\xi _3} - {k_1}{{\left\| {{\xi _3}} \right\|}^2}} \right] - \frac{{{L_0}{k_1}{k_2}}}{m}\xi _1^T{\xi _3}
\end{aligned}
\end{equation}

Incorporating (26) into (21), (21) can be rewritten as
\begin{equation}
\begin{aligned}
\dot V \le  - \frac{{{L_0}}}{{{{\left\| {{\xi _1}} \right\|}^{\frac{1}{{m - 1}}}}}}{\xi ^T}\widetilde{\Omega _1}\xi  - {L_0}{\xi ^T}\widetilde{\Omega _2}\xi  + \widetilde {{V_1}} + \widetilde {{V_2}}
\end{aligned}	
\end{equation}
where
\begin{equation}\small
\begin{aligned}
&\widetilde{\Omega _1}=\\
 &\frac{{{k_1}}}{m}\left[ {\begin{array}{*{20}{c}}
{{k_3}m + k_1^2\left( {m - 1} \right)}&0&{ - {k_1}\left( {m - 1} \right)}\\
0&{{k_4}m + k_2^2\left( {3m - 1} \right)}&{ - {k_2}\left( {2m - 1} \right)}\\
{ - {k_1}\left( {m - 1} \right)}&{ - {k_2}\left( {2m - 1} \right)}&{m - 1}
\end{array}} \right] \otimes {I_{nn}}\\
&\widetilde{\Omega _2} = {k_2}\left[ {\begin{array}{*{20}{c}}
{{k_3} + k_1^2\left( {3m - 2} \right)/m}&0&0\\
0&{{k_4} + k_2^2}&{ - {k_2}}\\
0&{ - {k_2}}&1
\end{array}} \right]\otimes {I_{nn}}
\end{aligned}
\end{equation}

It is easy to prove that the matrices $\widetilde{\Omega _1}$ and $\widetilde{\Omega _2}$ both are positive definite with (11). By using
\begin{equation}
{\lambda _{\min }}\left( P \right){\left\| \xi  \right\|^2} \le V \le {\lambda _{\max }}\left( P \right){\left\| \xi  \right\|^2}
\end{equation}
(27) can be further rewritten as
\begin{equation}\small
\dot V \le  - {L_0}\left( t \right){n_1}{V^{{p_1}}} + {n_2}{V^{\frac{1}{2}}} - \left( {L_0\left( t \right){n_3} - \frac{{2m - 2}}{m}{n_4}\frac{{{{\dot L}_0}}}{{{L_0}}}} \right)V
\end{equation}
where ${p_1} = \left( {2m - 3} \right)/\left( {2m - 2} \right)$,
\begin{equation}
\begin{aligned}
{n_1} &= \frac{{{\lambda _{\min }}\left(\widetilde{{\Omega _1}} \right)}}{{\lambda _{\max }^{{p_1}}\left( P \right)}}\\
{n_2} &= \frac{{\delta \sqrt {k_1^2 + k_2^2 + 4} }}{{\lambda _{\min }^{{\raise0.5ex\hbox{$\scriptstyle 1$}
\kern-0.1em/\kern-0.15em
\lower0.25ex\hbox{$\scriptstyle 2$}}}\left( P \right)}}\\
{n_3} &= \frac{{{\lambda _{\min }}\left(\widetilde {{\Omega _2}} \right)}}{{{\lambda _{\max }}\left( P \right)}}\\
{n_4} &= \frac{{{\lambda _{\max }}\left( Q \right)}}{{2{\lambda _{\min }}\left( P \right)}}
\end{aligned}
\end{equation}

(\romannumeral1) If $d(t)= 0$, then $\delta =0$, (30) will become
\begin{equation}
\dot V \le  - {L_0}\left( t \right){n_1}{V^{{p_1}}} - \left( {L_0\left( t \right){n_3} - \frac{{2m - 2}}{m}{n_4}\frac{{{{\dot L}_0}}}{{{L_0}}}} \right)V
\end{equation}

Due to ${\dot L_0}\left( t \right) \ge 0$, $L_0\left( t \right){n_3} - \left( {2m - 2} \right){n_4}{\dot L_0}/\left( {{L_0}m} \right)$ is positive in finite time. It follows
from (32) that 
\begin{equation}
\dot V \le  - {c_1}{V^{{p_1}}} - {c_2}V
\end{equation}
where ${c_1}$ and ${c_2}$ are positive constants, ${p_1} \in (0.5,1)$. By using \emph{Lemma 1}, $\xi$ can converge to origin in fast finite time, then $x_1,\dot x_1$ can fast converge to the origin in finite time and the proof of (\romannumeral1) is completed.

(\romannumeral2) If $d\left( t \right) \ne 0$, with the same analysis of (\romannumeral1), it follows from (30) that
\begin{equation}
\dot V \le  - {c_4}{V^{{p_1}}} - {c_5}V{\rm{ + }}{c_3}{V^{{\textstyle{1 \over 2}}}}
\end{equation}
where ${c_3},{c_4}$and ${c_5}$ are positive constants, ${p_1} \in (0.5,1)$. By using \emph{Lemma 2}, $\xi$ can converge to a region of origin in fast finite time. In addition, using (9), (10) and (13), the $x_1$ and $\dot x_1$ converge to a region of the origin in fast finite time. The proof of (\romannumeral2) is completed.
 
\end{proof}

\begin{remark}
By adpoting different values of $m$ in (9), a series of adaptive multivariable smooth second-order sliding mode control methods can be obtained. The proposed methods will be used in the design of controller and observer for MIMO systems and the superiority of the proposed control methods will be validated in the next section.
\end{remark}

\section{Numerical Simulations}
To facilitate validation of the effectiveness of the proposed approach, the following simplified multivariable perturbed control system is considered
\begin{equation}
{\dot x_1} = u + {d_1}
\end{equation}
Where states ${x_1} = {\left[{x_{11}}\quad {x_{12}}\quad{x_{13}} \right]^T}\in \mathbb{R}^3$, $u \in \mathbb{R}^3$  and disturbance ${d_1} = {\left[{d_{11}}\quad {d_{12}}\quad{d_{13}} \right]^T}\in \mathbb{R}^3$ satisfies $\left\| {{{\dot d}_1}} \right\| \le {\delta _1}$ where ${\delta _1}$ is an unknown non-negative constant. The initial values of the states are set as ${x_1}\left( 0 \right) = {\left[1\quad 3\quad 2 \right]^T}$  
\subsection{Experiment \uppercase\expandafter{\romannumeral1}: Comparison with constant disturbance}
In experiment \uppercase\expandafter{\romannumeral1}, the constant disturbance is set as ${d_1} = {\left[0.1\quad 0.2\quad 0.2 \right]^T}$. Based on \emph{Proposition 1}, the controller is expressed as follows 
\begin{equation}\small
\begin{aligned}
u =  - {L_{c1}}(t){\kern 1pt} \frac{{{x_1}}}{{{{\left\| {{x_1}} \right\|}^{{\textstyle{1 \over m}}}}}} &- {L_{c2}}(t){x_1}\\
&- \int_0^t {\left[ {{L_{c3}}(t)\frac{{{x_1}}}{{{{\left\| {{x_1}} \right\|}^{{\textstyle{2 \over m}}}}}} + {L_{c4}}(t){x_1}} \right]d\tau } 
\end{aligned}
\end{equation}
where ${L_{c1}}(t),{L_{c2}}(t),{L_{c3}}(t),{L_{c4}}(t)$ are formulated the same as (15) and the parameters are set as: $m = 3,{k_1} = 2,{k_2} = 2.5,{k_3} = 4,{k_4} = 30,\kappa  = 10$. 

The adaptive multivariable super-twisting sliding mode control (AMSTSMC) method proposed in [9] is implemented as the comparison method, which adopts the same  parameters except that $m$ is set as 2.
\begin{figure}
	\centering
	\subfigure[The curves of ${x_{11}}$ with constant disturbance]{
	\includegraphics[width=0.2\textwidth]{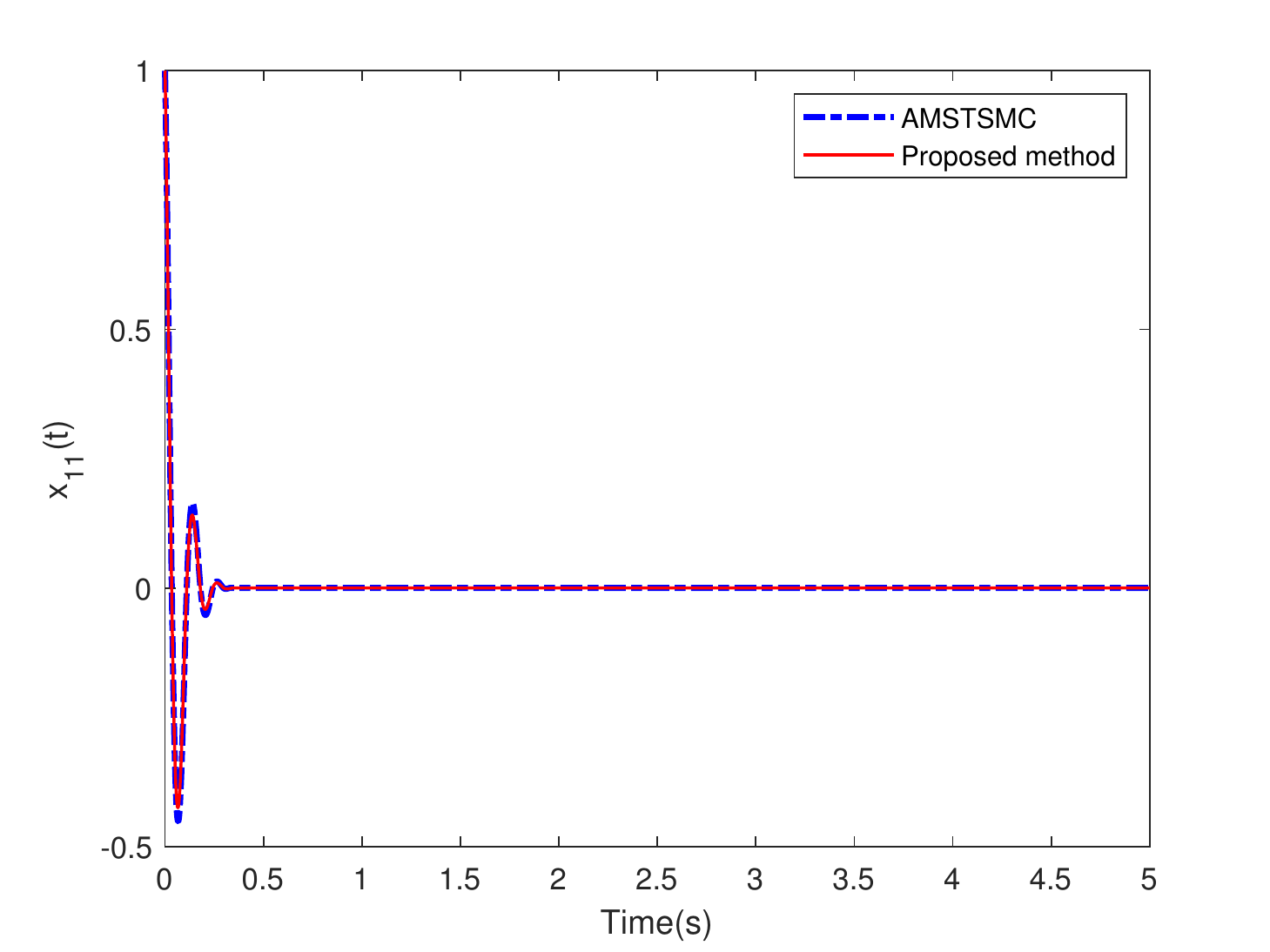}}
	\subfigure[Local magnification of ${x_{11}}$]{
	\includegraphics[width=0.2\textwidth]{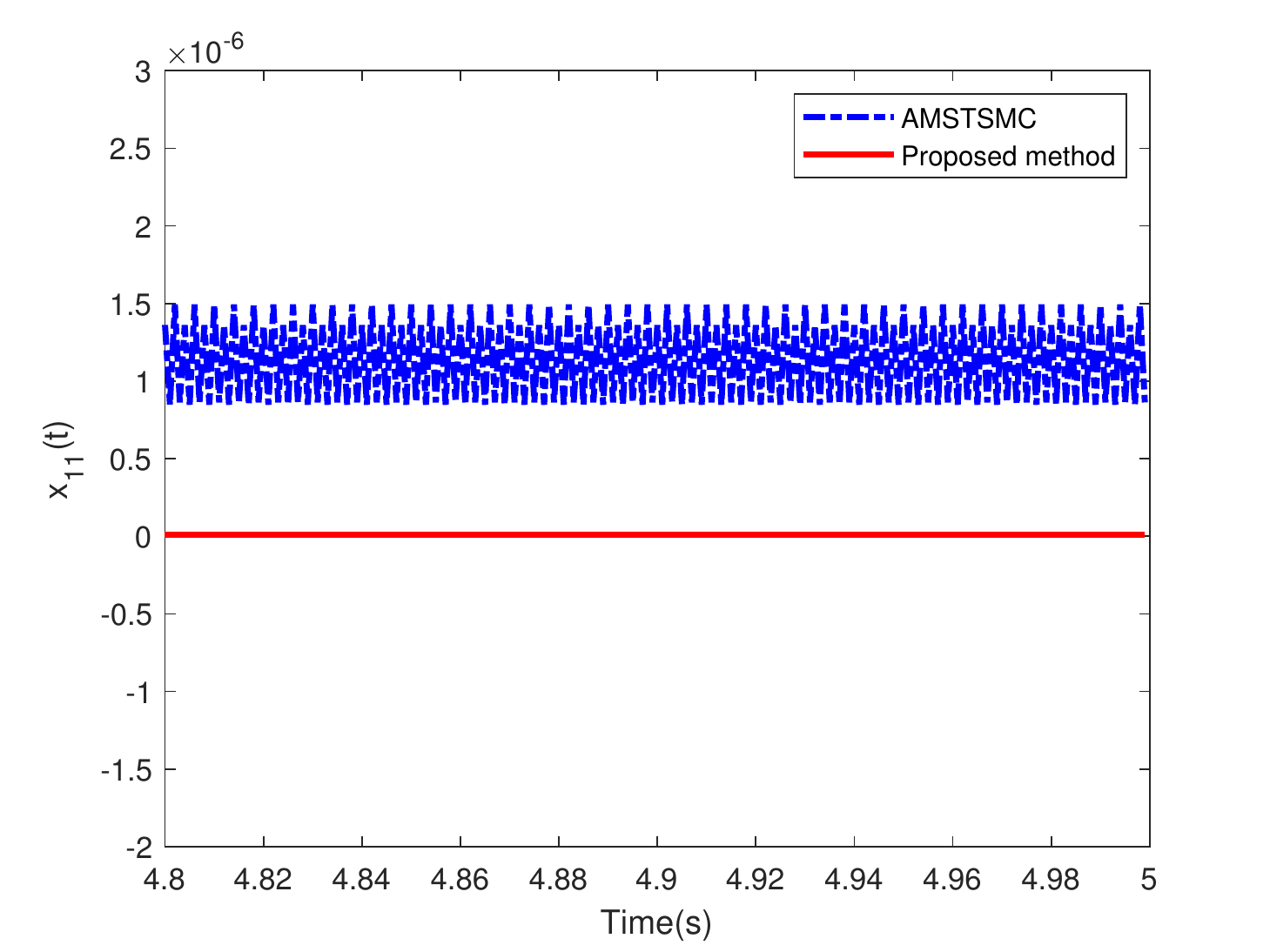}}
	\subfigure[The curves of ${x_{12}}$ with constant disturbance]{
	\includegraphics[width=0.2\textwidth]{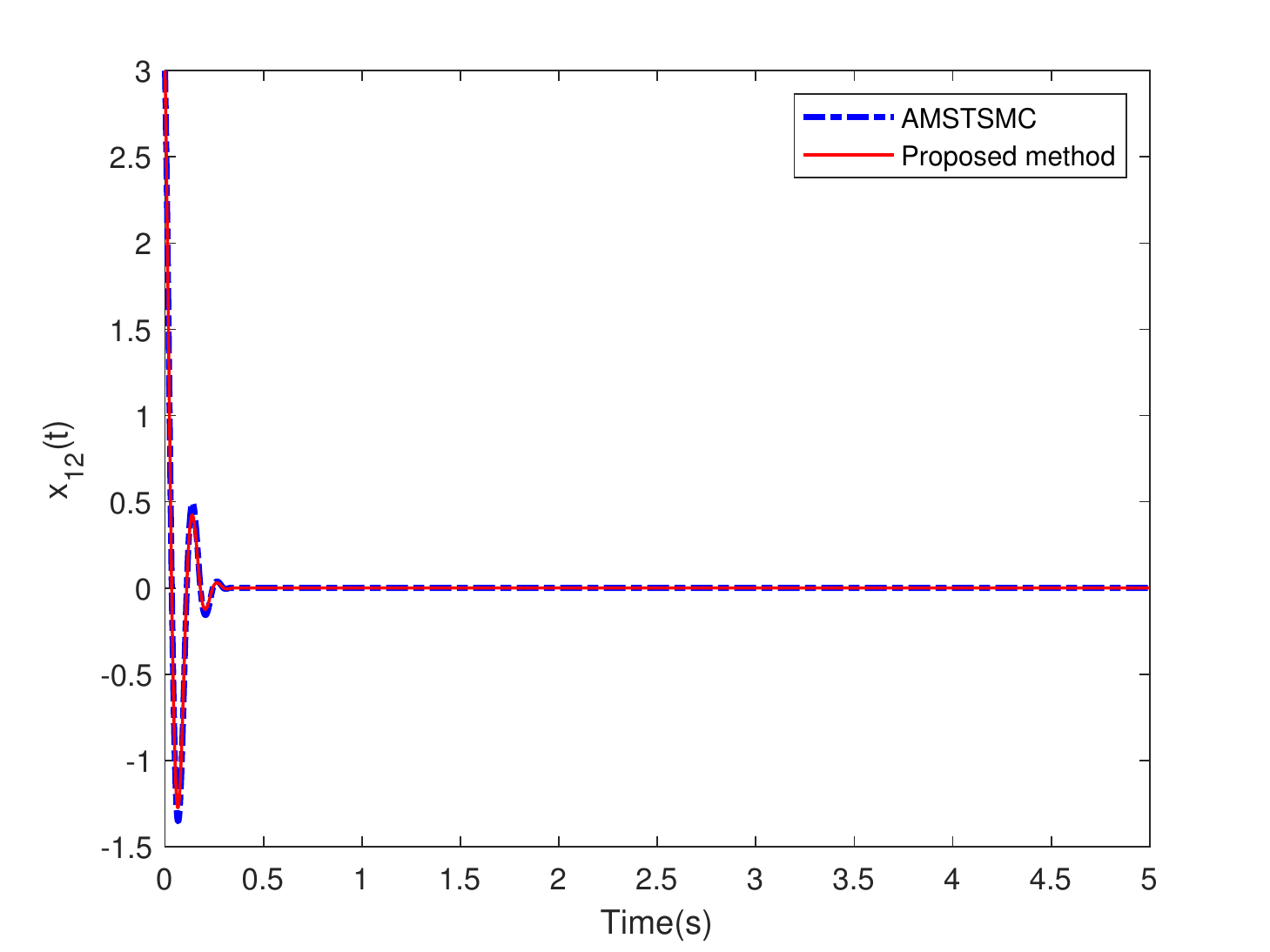}}
	\subfigure[Local magnification of ${x_{12}}$]{
	\includegraphics[width=0.2\textwidth]{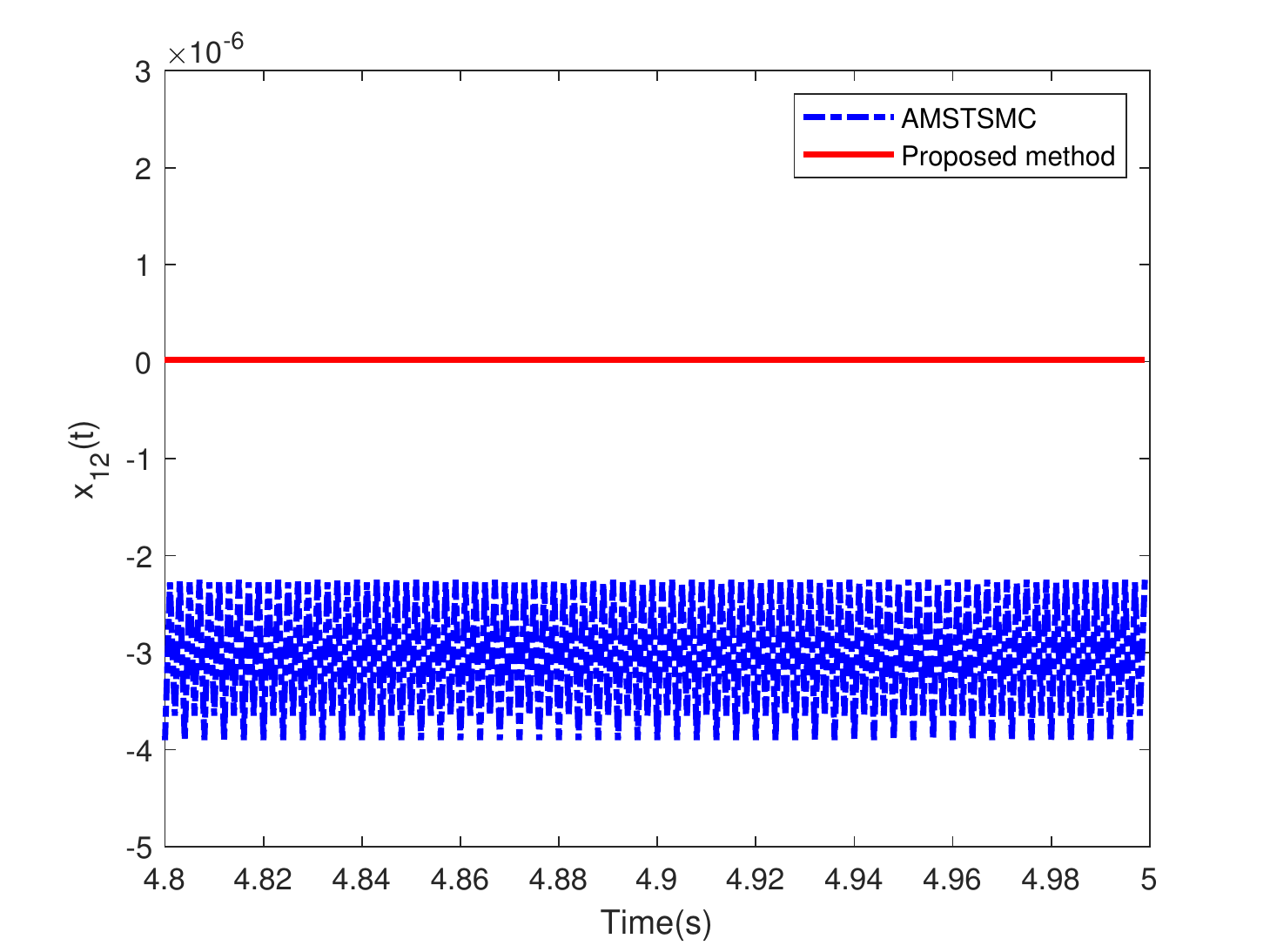}}
    \subfigure[The curves of ${x_{13}}$ with constant disturbance]{
	\includegraphics[width=0.2\textwidth]{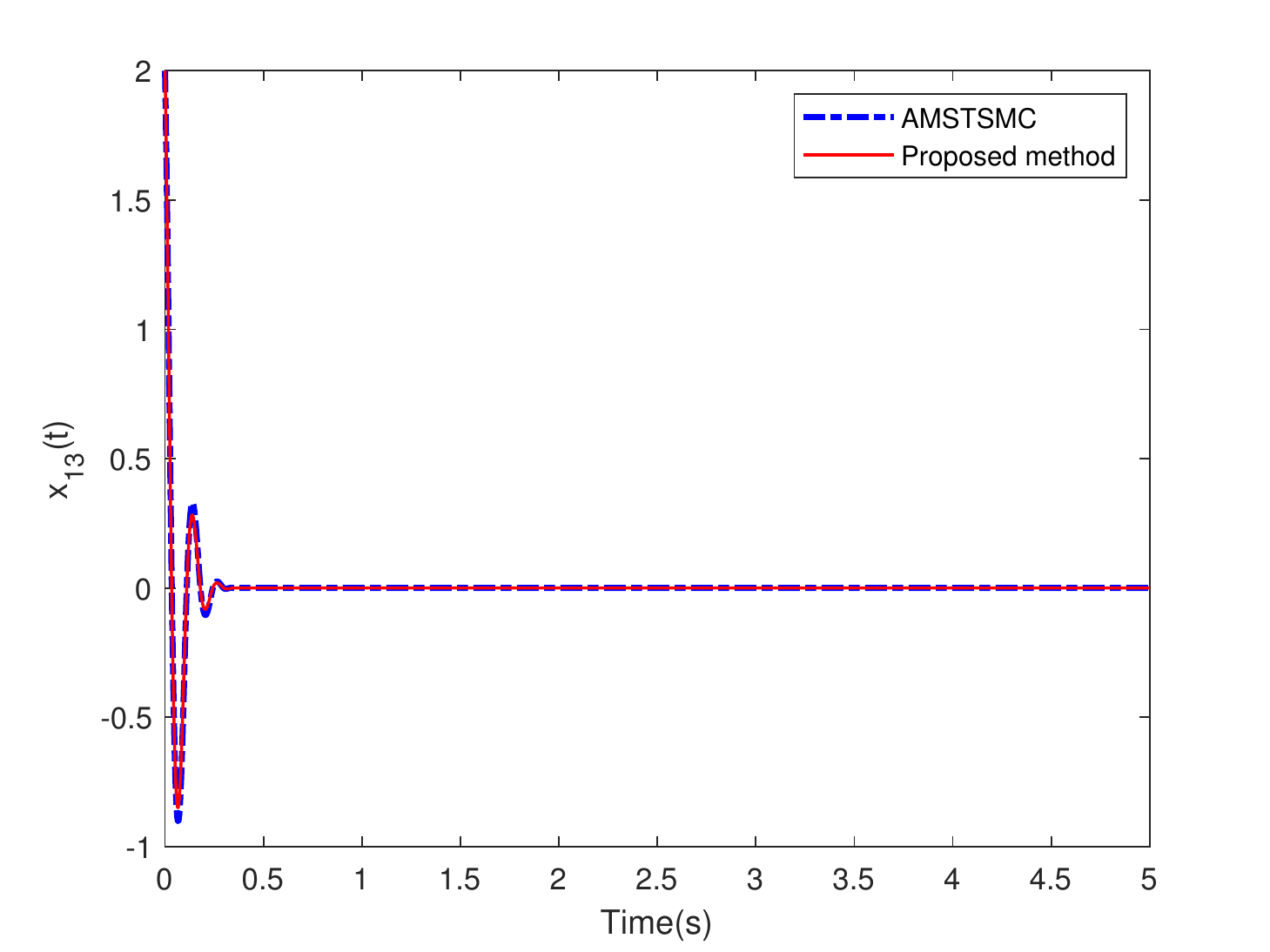}}
	\subfigure[Local magnification of ${x_{13}}$]{
	\includegraphics[width=0.2\textwidth]{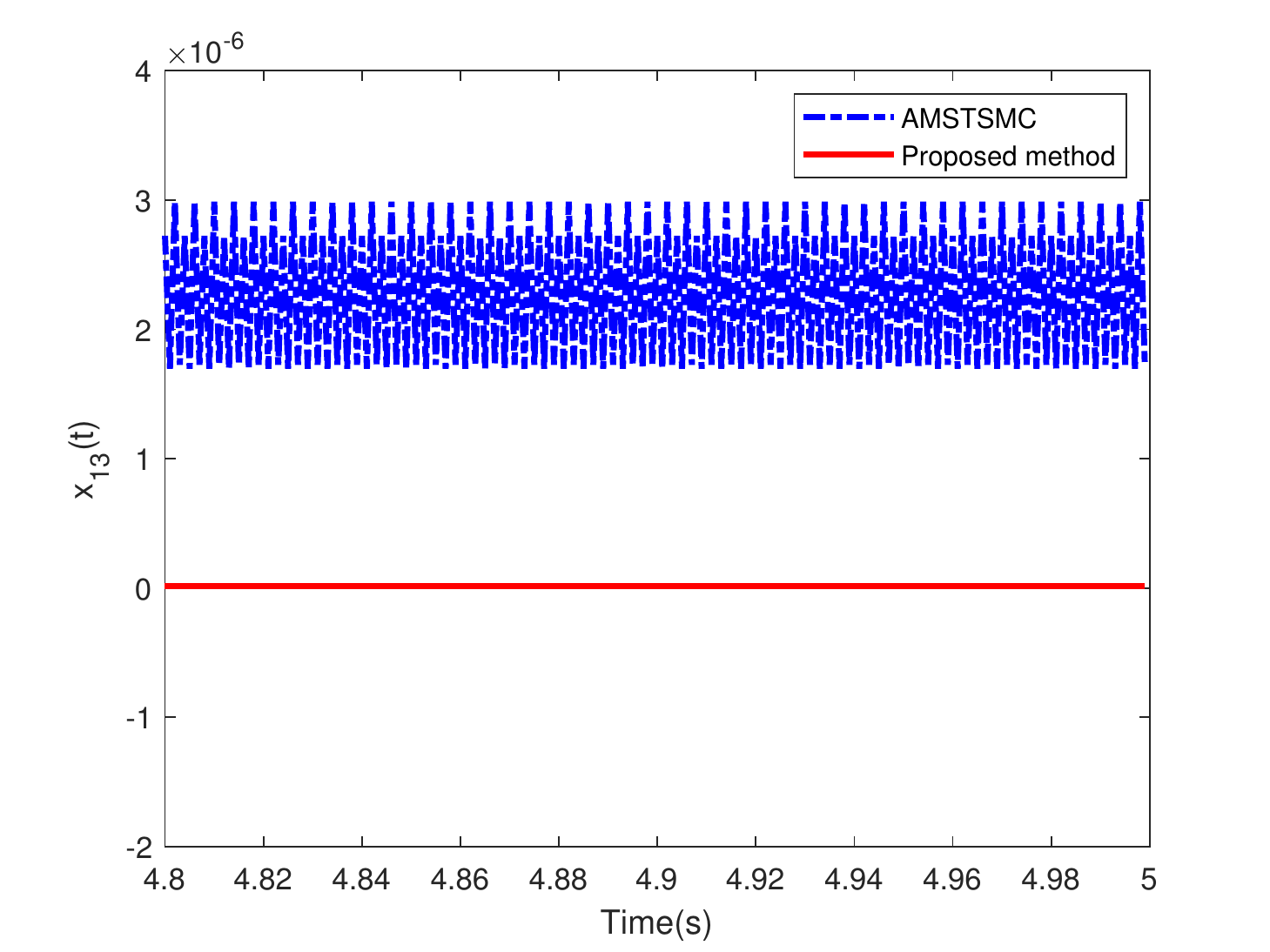}}
	\caption{Results of experiment \uppercase\expandafter{\romannumeral1}}
\end{figure}

The results of experiment \uppercase\expandafter{\romannumeral1} are given in Fig.1: (a)-(f). Fig. 1: (a), (c) and (e) illustrate the responses of state variables by utilizing AMSTSMC and the proposed method, respectively. Fig. 1: (b), (d) and (f) present the local magnification of corresponding state variable to show the steady-state response more clearly. Fig. 1: (a) shows that the state variable converges to the origin in finite time by using the proposed control method as fast as the method in [9], which means the newly proposed method also possesses the characteristic of fast finite-time convergence. In addition, Fig. 1: (b) demonstrates that the proposed method can diminish the chattering existing in the AMSTSMC method. A similar conclusion can be drawn from the Fig. 1: (c), (d) and (e), (f).

\subsection{Experiment \uppercase\expandafter{\romannumeral2}: Comparison with time-varying disturbance}
In experiment \uppercase\expandafter{\romannumeral2}, the time-varying disturbance is set as ${d_1} = {\left[0.1\sin (t)\quad 0.2\cos (4t)\quad 0.2\cos (2t) \right]^T}$. The controllers and parameters are selected the same as those in the experiment \uppercase\expandafter{\romannumeral1}.
\begin{figure}
	\centering
	\subfigure[The curves of ${x_{11}}$ with time-varying disturbance]{
	\includegraphics[width=0.2\textwidth]{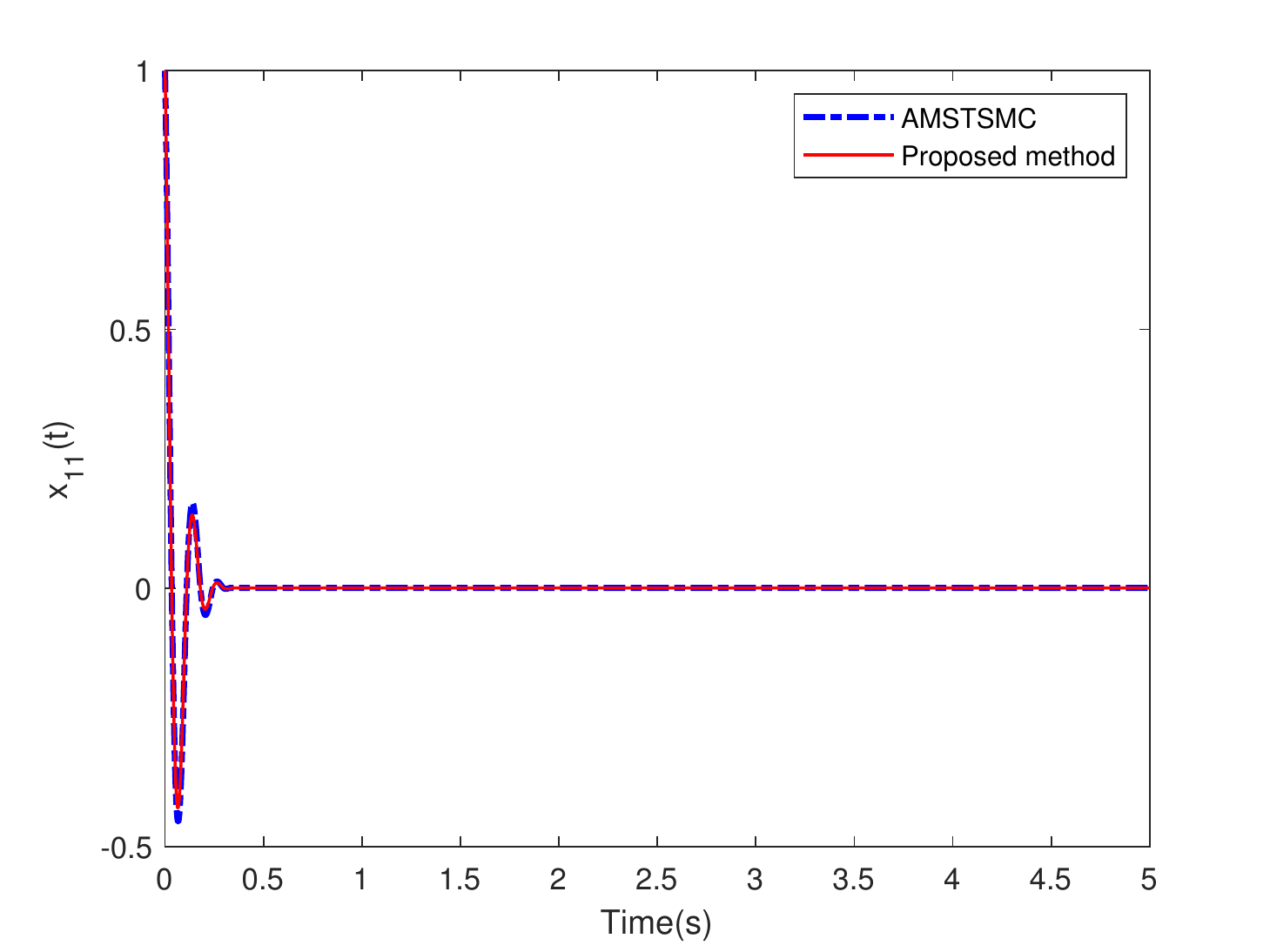}}
	\subfigure[Local magnification of ${x_{11}}$]{
	\includegraphics[width=0.2\textwidth]{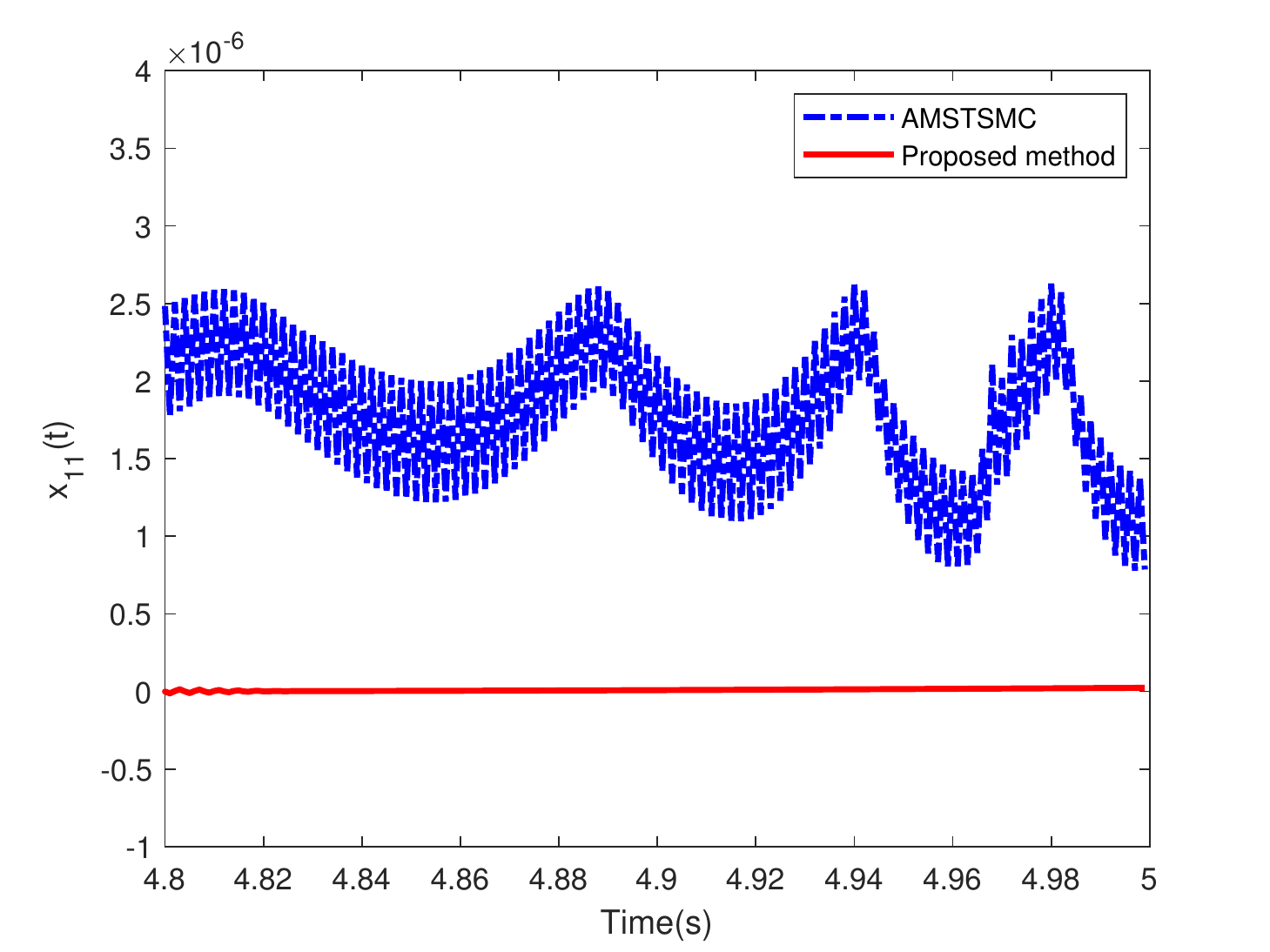}}
	\subfigure[The curves of ${x_{12}}$ with time-varying disturbance]{
	\includegraphics[width=0.2\textwidth]{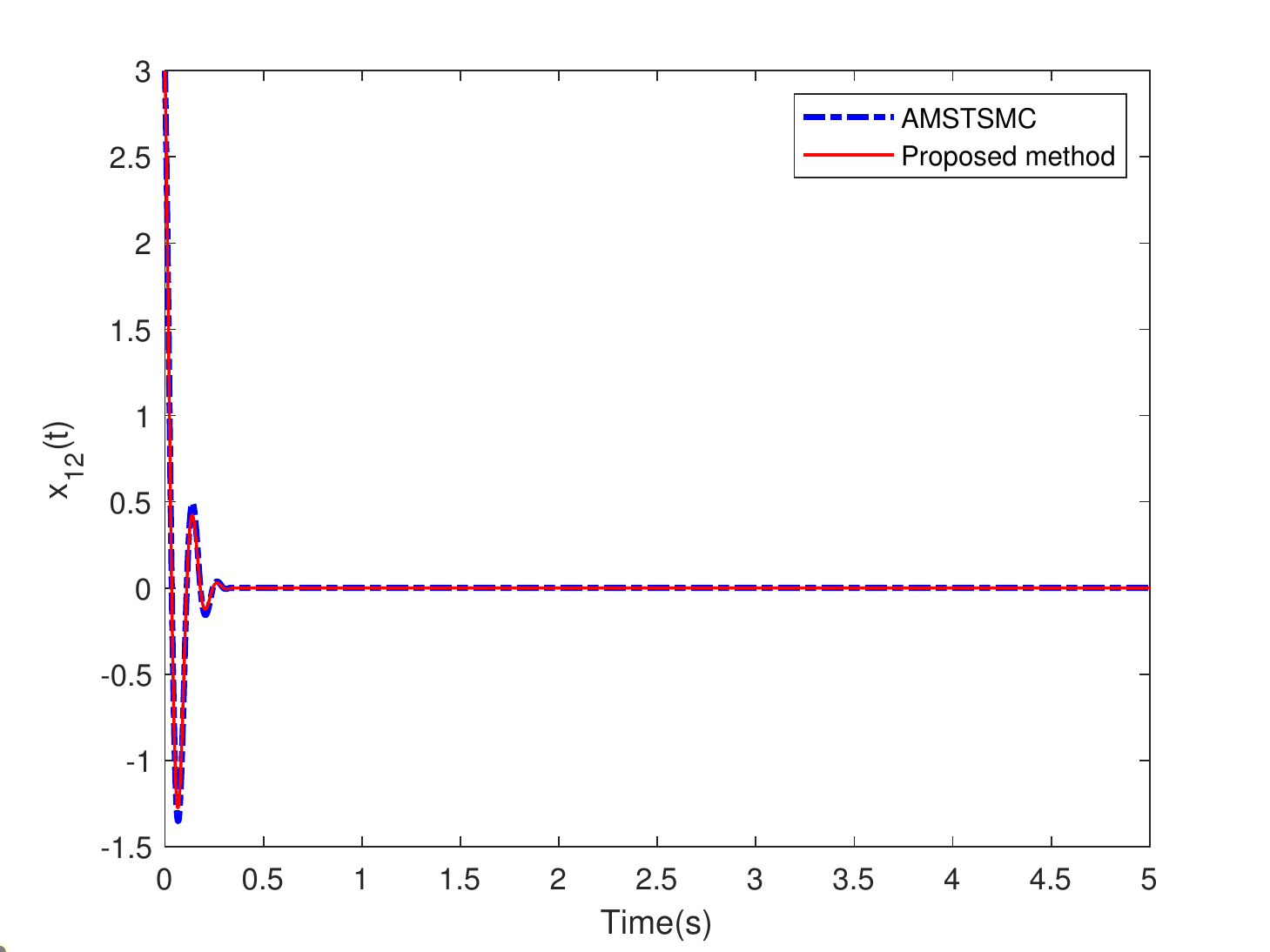}}
	\subfigure[Local magnification of ${x_{12}}$]{
	\includegraphics[width=0.2\textwidth]{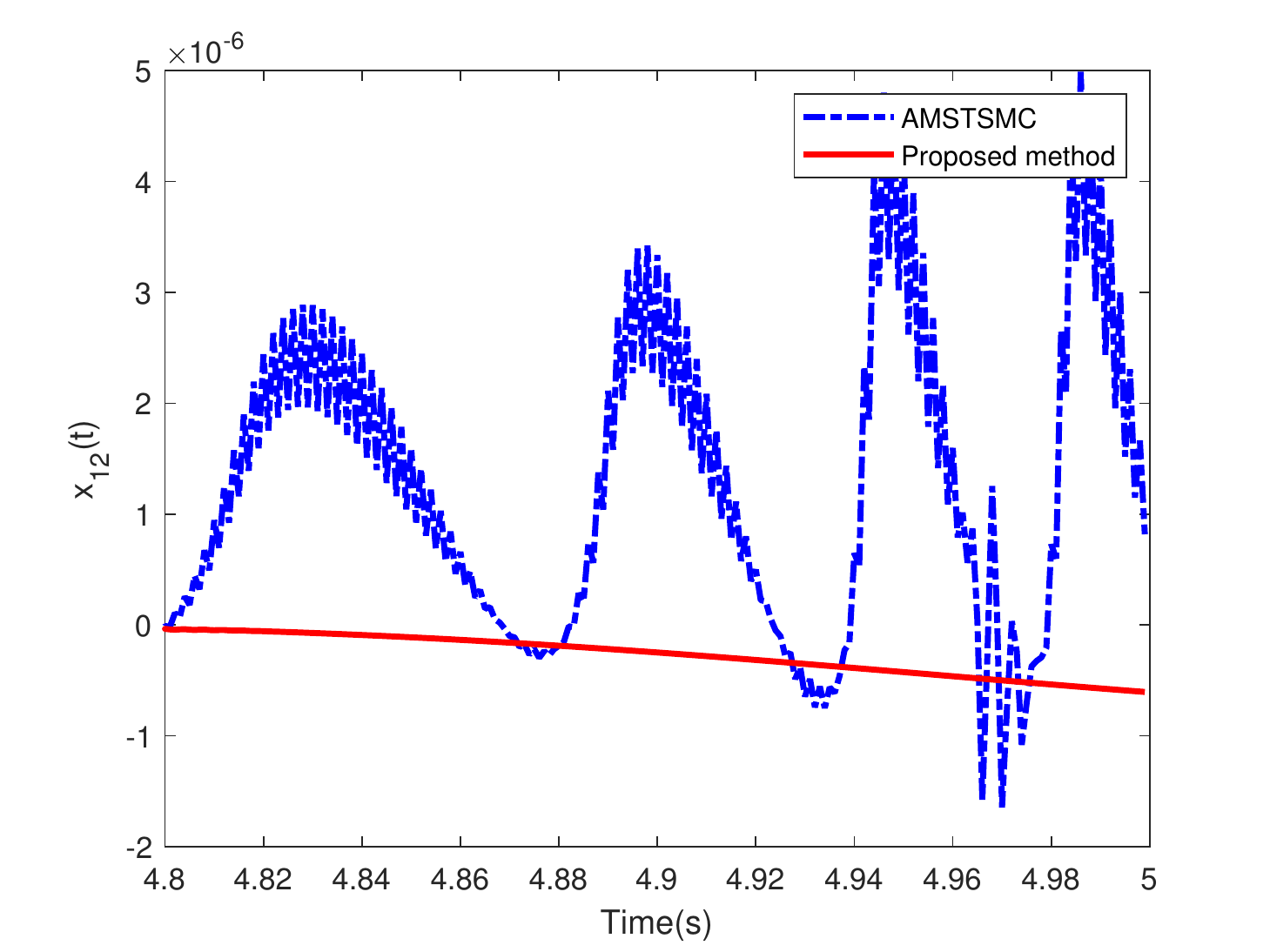}}
    \subfigure[The curves of ${x_{13}}$ with time-varying disturbance]{
	\includegraphics[width=0.2\textwidth]{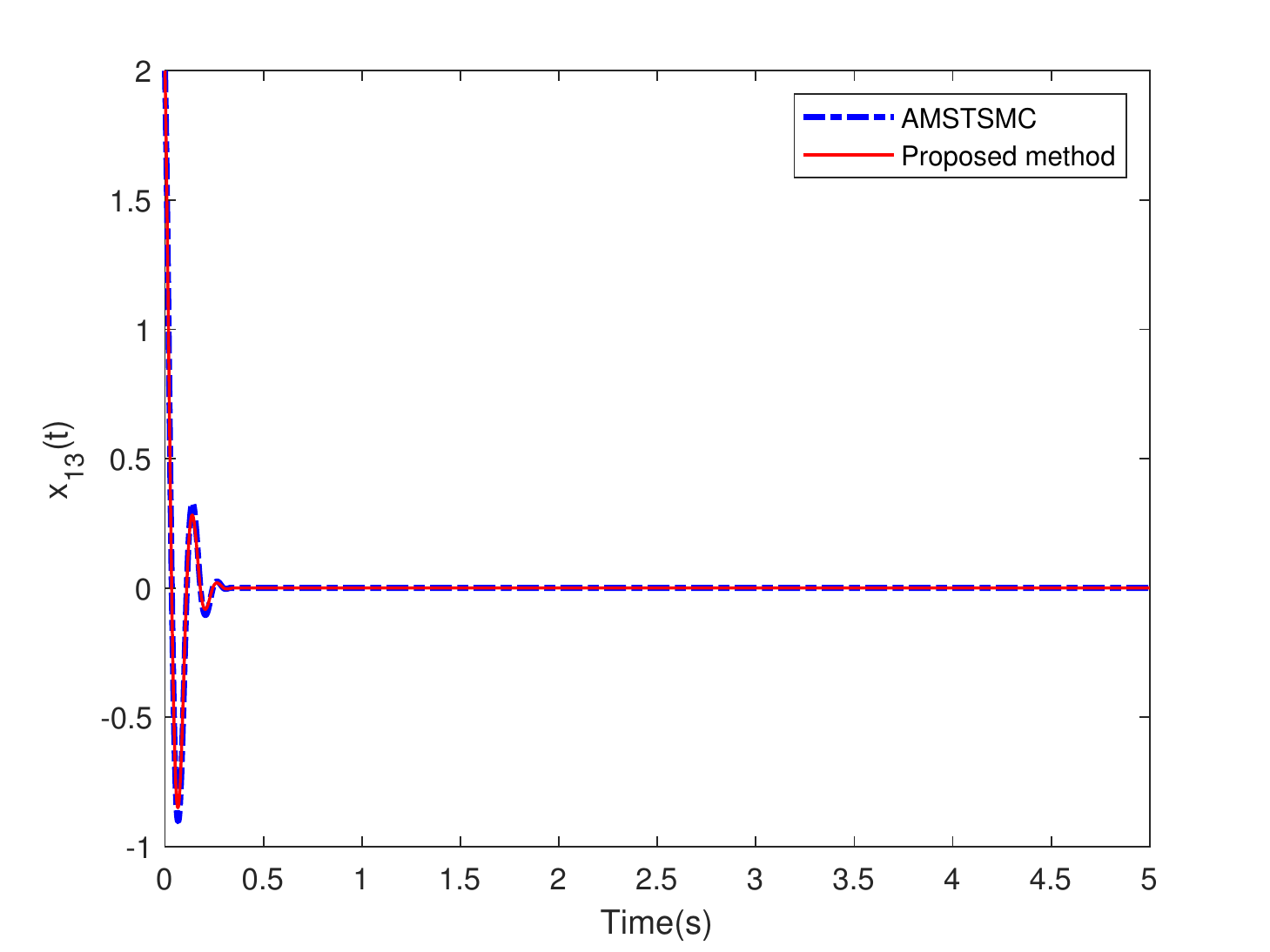}}
	\subfigure[Local magnification of ${x_{13}}$]{
	\includegraphics[width=0.2\textwidth]{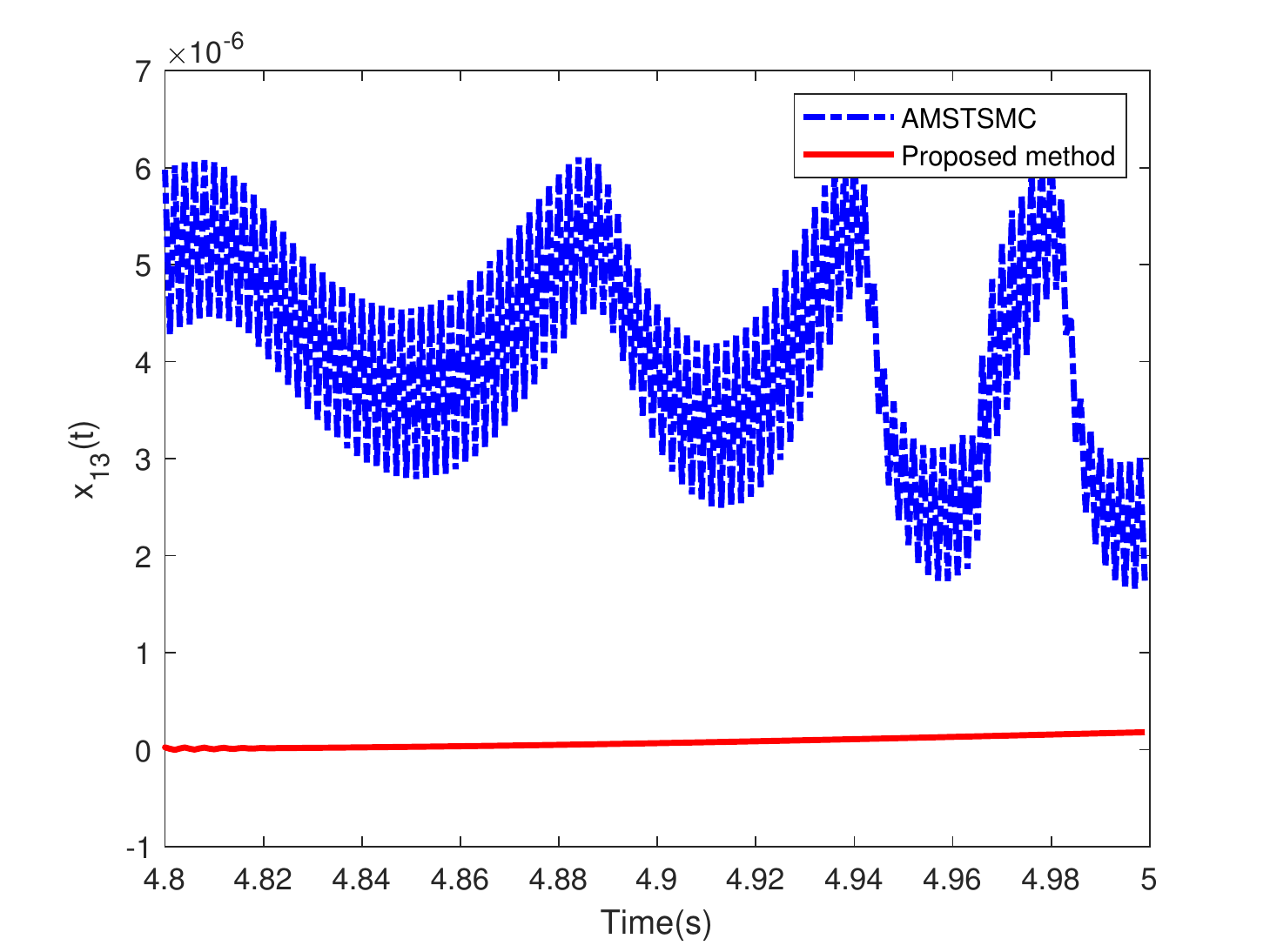}}
	\caption{Results of experiment \uppercase\expandafter{\romannumeral2}}
\end{figure}

The results of experiment \uppercase\expandafter{\romannumeral2} are shown in Fig. 2: (a)-(f). Fig. 2: (a), (c) and (e) demonstrate the responses of state variables by using AMSTSMC and the proposed method, respectively. Fig. 2: (b), (d) and (f) present the local magnification of corresponding state variable to show the steady-state response more clearly. From Fig. 2: (a) and (b), one can see that the proposed method can guarantee the fast finite-time uniformly ultimately boundedness coupled with enormous chattering suppression. A similar conclusion can be drawn from the Fig. 2: (c), (d) and (e), (f).

\subsection{Experiment \uppercase\expandafter{\romannumeral3}: Observer comparison}
Based on \emph{Proposition 1}, the observer for estimating the disturbance ${d_1}$ is designed as follows
\begin{equation}\small
\begin{aligned}
{\hat d_1} = {L_{d1}}(t){\kern 1pt} \frac{{{e_1}}}{{{{\left\| {{e_1}} \right\|}^{{\textstyle{1 \over m}}}}}} &+ {L_{d2}}(t){e_1}\\ 
&+ \int_0^t {\left[ {{L_{d3}}(t)\frac{{{e_1}}}{{{{\left\| {{e_1}} \right\|}^{{\textstyle{2 \over m}}}}}} + {L_{d4}}(t){e_1}} \right]d\tau }  
\end{aligned}
\end{equation}
where ${e_1} = {z_1} - {x_1}$, ${\dot z_1} = u + {\hat d_1}$. The adaptive gains ${L_{di}}(t)(i=1,2,3,4)$ are formulated the same as (15) where the parameters are set as: $m = 3,{k_1} = 2,{k_2} = 2.5,{k_3} = 4,{k_4} = 30,\kappa  = 10$. 

The purpose of the experiment \uppercase\expandafter{\romannumeral3} is to compare the performance of the proposed observer with that of the adaptive multivariable disturbance observer (AMDO) in [9], which adopts the same parameters except that $m$ is set as $2$. To facilitate the comparison, the disturbance is set as ${d_1} = {\left[\sin (t)\quad 2\cos (4t)\quad 2\cos (2t) \right]^T}$. 
\begin{figure}
	\centering
	\subfigure[Disturbance estimation error curves of $d_{11}$]{
	\includegraphics[width=0.2\textwidth]{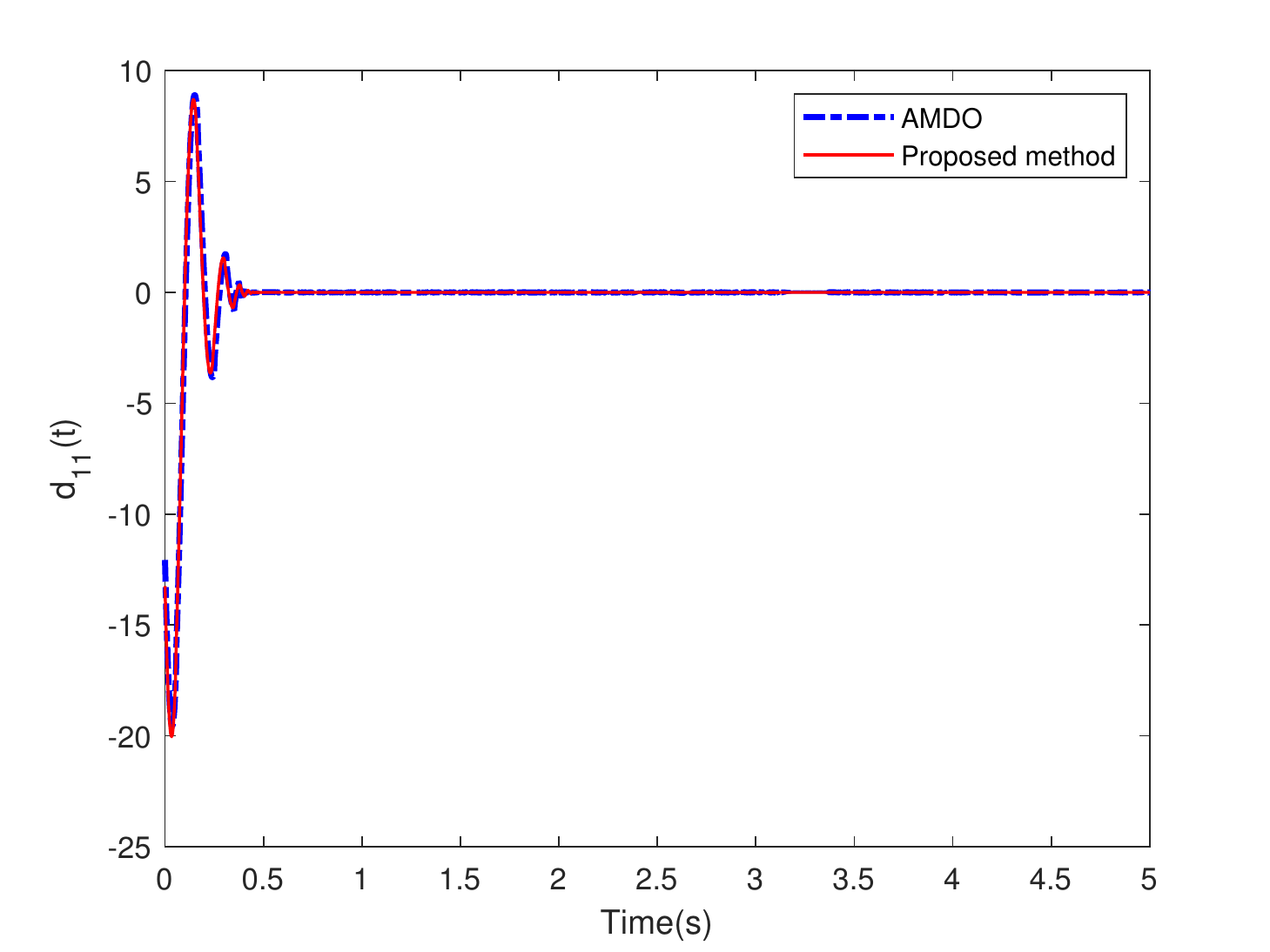}}
	\subfigure[Local magnification of disturbance estimation error]{
	\includegraphics[width=0.2\textwidth]{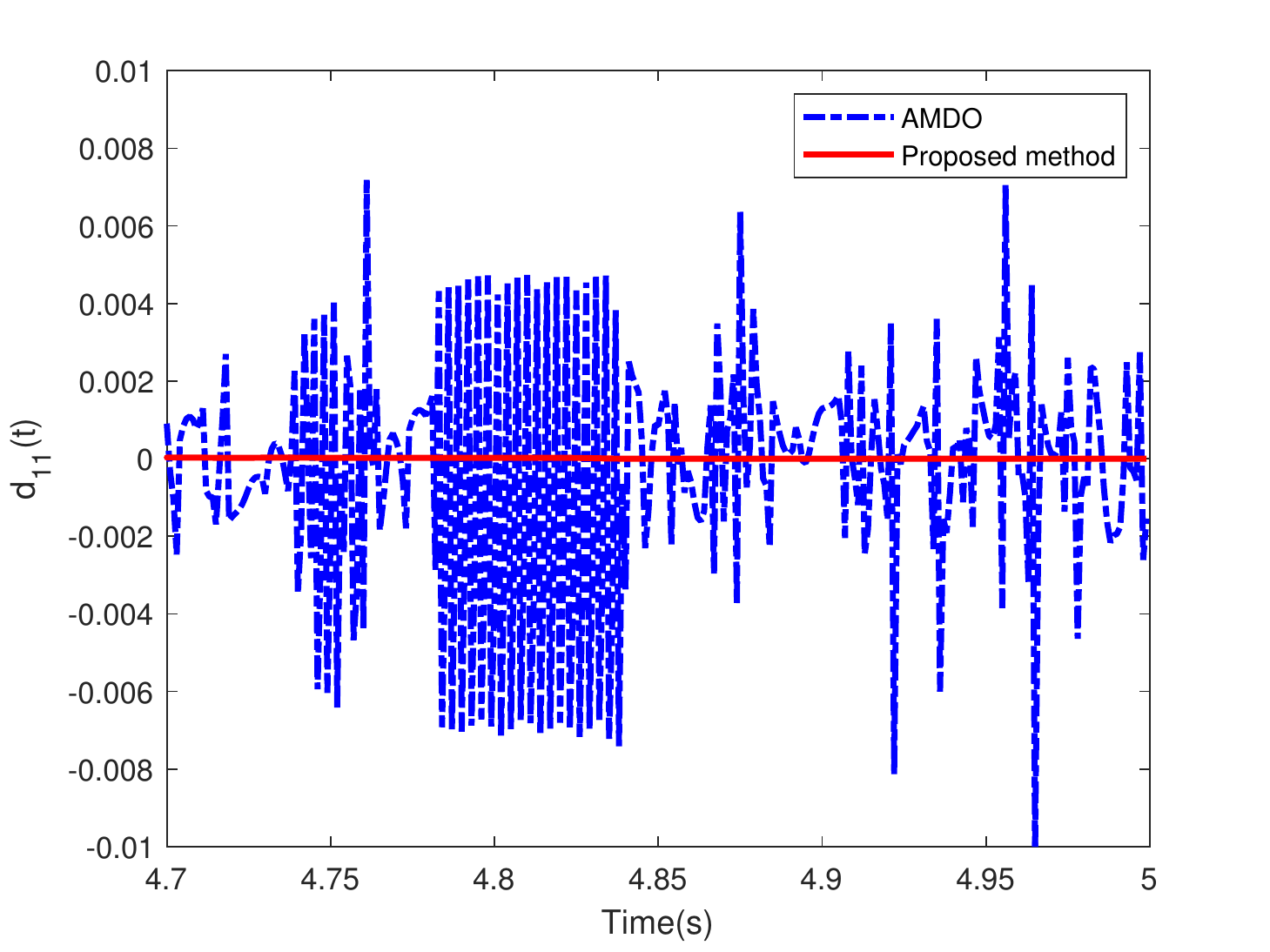}}
    \subfigure[Disturbance estimation error curves of $d_{12}$]{
	\includegraphics[width=0.2\textwidth]{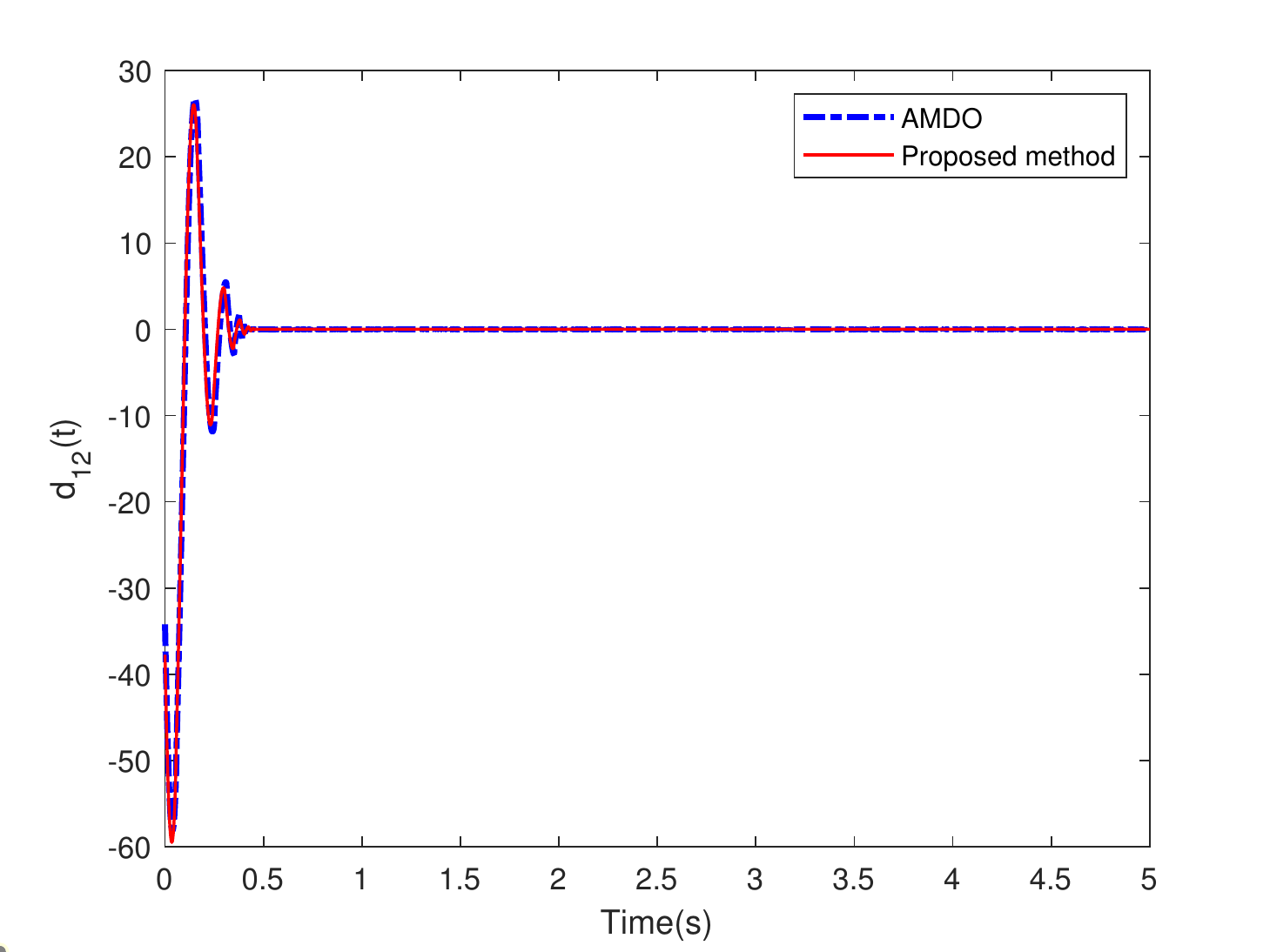}}
	\subfigure[Local magnification of disturbance estimation error]{
	\includegraphics[width=0.2\textwidth]{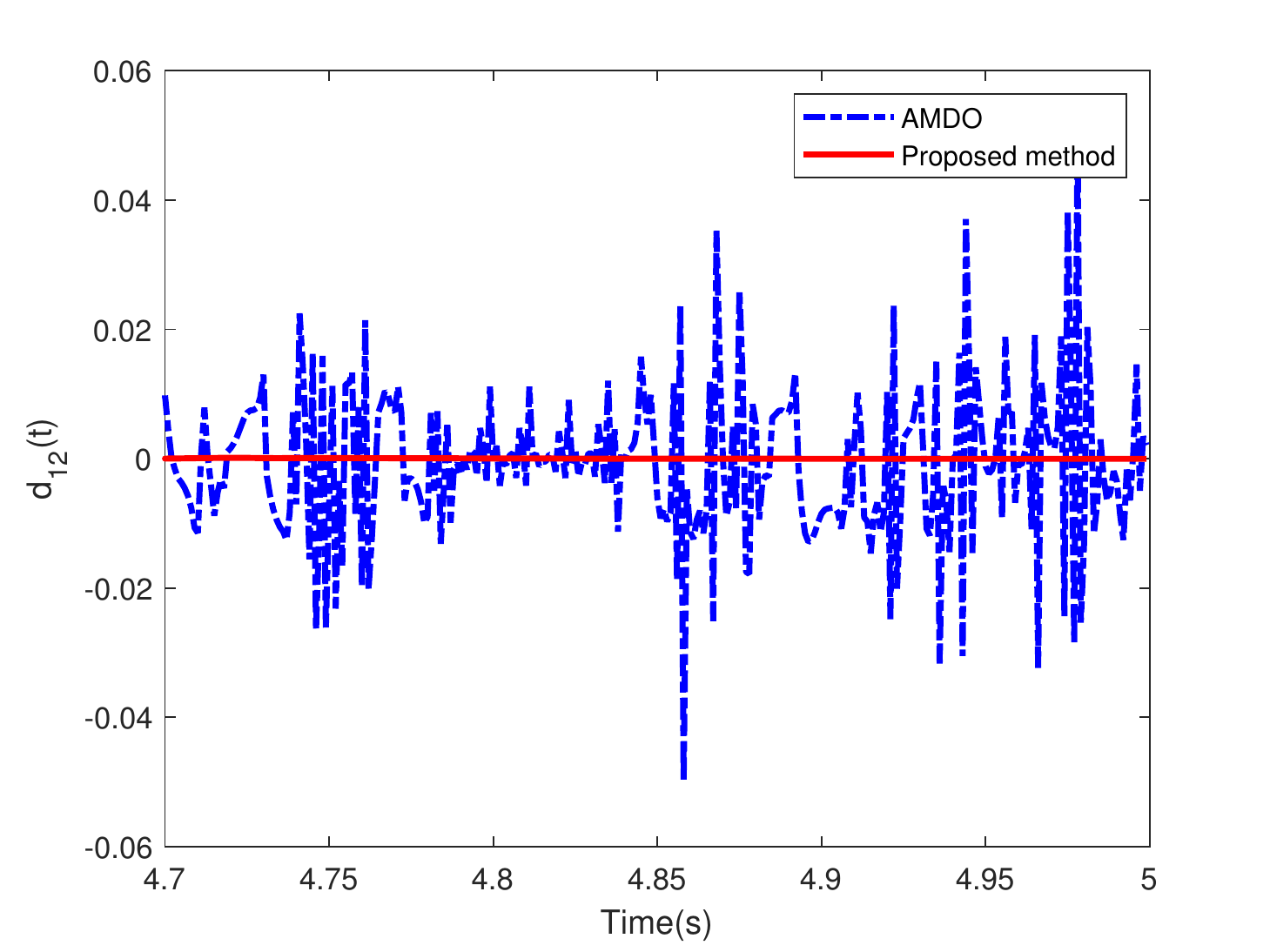}}
    \subfigure[Disturbance estimation error curves of $d_{13}$]{
	\includegraphics[width=0.2\textwidth]{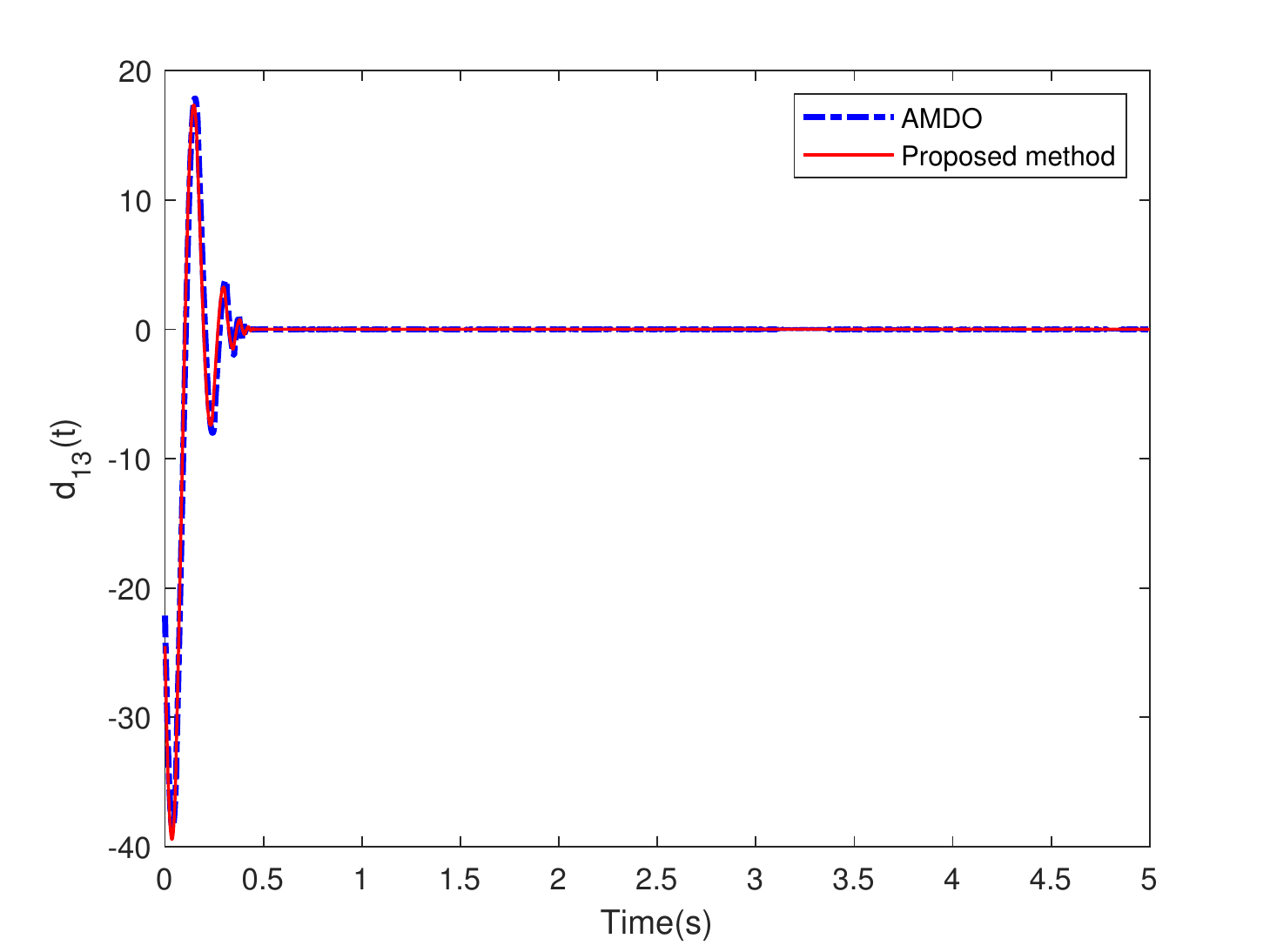}}
	\subfigure[Local magnification of disturbance estimation error]{
	\includegraphics[width=0.2\textwidth]{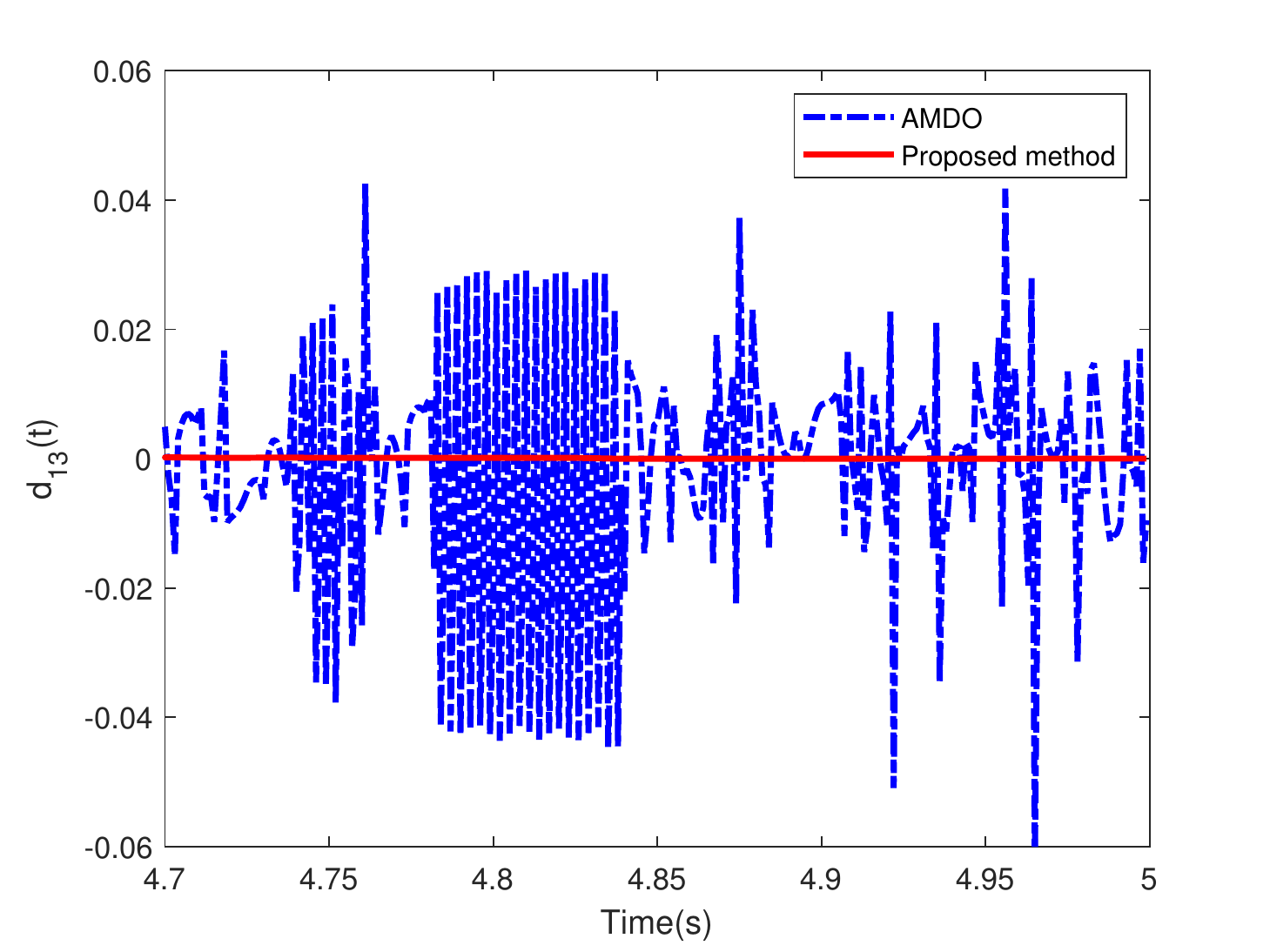}}
	\caption{Results of experiment \uppercase\expandafter{\romannumeral3}}
\end{figure}

The results of experiment \uppercase\expandafter{\romannumeral3} are given in Fig. 3: (a)-(f). Fig. 3: (a), (c) and (e) indicate the responses of observer estimation error components by using AMDO and the proposed observer. Fig. 3: (b), (d) and (f) present the local magnification to show the responses more clearly. From Fig. 3: (a) and (b), one can see that the estimation error can fast converge to a region of the origin in finite time by using the proposed observer. In addition, the proposed observer can effectively alleviate chattering effect existing in the AMDO and provide smoother output for the disturbance estimation. A similar conclusion can be drawn from the Fig. 3: (c), (d) and (e), (f).

\section{Conclusion}
In this paper, a novel adaptive multivariable smooth second-order sliding mode approach has been proposed. This newly proposed approach is utilized in the design of controller and observer for MIMO systems. According to the types of disturbances, the fast finite-time convergence and the fast finite-time uniformly ultimately boundedness of the systems are proved with the corresponding finite-time Lyapunov stability theory. The comparative numerical simulations are performed to demonstrate the effectiveness and superiority of the proposed approach with fast finite-time convergence, adaptation to disturbances, and chattering suppression for the MIMO system.

\bibliographystyle{Bibliography/IEEEtranTIE}
\bibliography{Bibliography/IEEEabrv,Bibliography/myRef}\ 

\begin{thebibliography}{10}
\providecommand{\url}[1]{#1}
\csname url@samestyle\endcsname
\providecommand{\newblock}{\relax}
\providecommand{\bibinfo}[2]{#2}
\providecommand{\BIBentrySTDinterwordspacing}{\spaceskip=0pt\relax}
\providecommand{\BIBentryALTinterwordstretchfactor}{4}
\providecommand{\BIBentryALTinterwordspacing}{\spaceskip=\fontdimen2\font plus
\BIBentryALTinterwordstretchfactor\fontdimen3\font minus
  \fontdimen4\font\relax}
\providecommand{\BIBforeignlanguage}[2]{{%
\expandafter\ifx\csname l@#1\endcsname\relax
\typeout{** WARNING: IEEEtran.bst: No hyphenation pattern has been}%
\typeout{** loaded for the language `#1'. Using the pattern for}%
\typeout{** the default language instead.}%
\else
\language=\csname l@#1\endcsname
\fi
#2}}
\providecommand{\BIBdecl}{\relax}
\BIBdecl

\bibitem{Levant2003}
Levant and Arie, ``Higher-order sliding modes, differentiation and
  output-feedback control,'' \emph{International Journal of Control}, vol.~76,
  no. 9-10, pp. 924--941, 2003.

\bibitem{2007Smooth}
Y.~B. Shtessel, I.~A. Shkolnikov, and A.~Levant, ``Smooth second-order sliding
  modes: Missile guidance application,'' \emph{Automatica}, vol.~43, no.~8, pp.
  1470--1476, 2007.

\bibitem{Moreno2008}
J.~A. {Moreno} and M.~{Osorio}, ``A {Lyapunov} approach to second-order sliding
  mode controllers and observers,'' in \emph{2008 47th IEEE Conference on
  Decision and Control}, pp. 2856--2861, 2008.

\bibitem{Shtessel2012}
Y.~Shtessel, M.~Taleb, and F.~Plestan, ``A novel adaptive-gain supertwisting
  sliding mode controller: Methodology and application,'' \emph{Automatica},
  vol.~48, no.~5, pp. 759 -- 769, 2012.

\bibitem{ASTC2015}
S.~{Laghrouche}, J.~{Liu}, F.~S. {Ahmed}, M.~{Harmouche}, and M.~{Wack},
  ``Adaptive second-order sliding mode observer-based fault reconstruction for
  pem fuel cell air-feed system,'' \emph{IEEE Transactions on Control Systems
  Technology}, vol.~23, no.~3, pp. 1098--1109, 2015.

\bibitem{Jiang2018}
Q.~Hu and B.~Jiang, ``Continuous finite-time attitude control for rigid
  spacecraft based on angular velocity observer,'' \emph{IEEE Transactions on
  Aerospace and Electronic Systems}, vol.~54, no.~3, pp. 1082--1092, 2018.

\bibitem{2020arXiv}
X.~{Wang}, Z.~{Li}, Z.~{He}, and H.~{Gao}, ``Adaptive fast smooth second-order
  sliding mode control for attitude tracking of a 3-dof helicopter,''
  \emph{arXiv e-prints}, p. arXiv:2008.10817, Aug. 2020.

\bibitem{2014Multi}
I.~Nagesh and C.~Edwards, ``A multivariable super-twisting sliding mode
  approach,'' \emph{Automatica}, vol.~50, no.~3, pp. 984--988, 2014.

\bibitem{2015Adaptive_Multi}
B.~Tian, L.~Yin, and H.~Wang, ``Finite-time reentry attitude control based on
  adaptive multivariable disturbance compensation,'' \emph{IEEE Transactions on
  Industrial Electronics}, vol.~62, no.~9, pp. 5889--5898, 2015.

\bibitem{1999Filippov}
A.~F. Filippov, ``Differential equations with discontinuous righthand sides,''
  \emph{Journal of Mathematical Analysis \& Applications}, vol. 154, no.~2, pp.
  99--128, 1999.

\bibitem{2005lemma1}
S.~Yu, X.~Yu, B.~Shirinzadeh, and Z.~Man, ``Continuous finite-time control for
  robotic manipulators with terminal sliding mode,'' \emph{Automatica},
  vol.~41, no.~11, pp. 1957--1964, 2005.

\end{thebibliography}

\end{document}